\newcommand{\cl}{\mathcal}
\newcommand{\sub}{\subseteq}
\DeclareMathOperator{\adj}{adj}
\DeclareMathOperator{\rank}{rank}
\newcommand{\set}[1]{\left\{#1\right\}}
\newcommand{\grp}[1]{\left(#1\right)}
\DeclareMathOperator{\head}{head}
\DeclareMathOperator{\tail}{tail}
\renewcommand{\sub}{\subseteq}
\newcommand{\dia}{\diamond}
\newcommand{\eout}{E_{\text{out}}}
\newcommand{\ein}{E_{\text{in}}}
\DeclareMathOperator{\poly}{poly}
\renewcommand{\b}{\boldsymbol}
\renewcommand{\t}{\text}
\newcommand{\new}{\text{new}}
\definecolor{darkorange}{rgb}{1.0, 0.55, 0.0} 
\definecolor{darkmidnightblue}{rgb}{0.0, 0.2, 0.4}
\definecolor{brilliantlavender}{rgb}{0.96, 0.73, 1.0}
\definecolor{alizarin}{rgb}{0.82, 0.1, 0.26}
\definecolor{byzantium}{rgb}{0.36, 0.22, 0.33}
\definecolor{brightube}{rgb}{0.82, 0.62, 0.91}
\newcommand\IfRestateTF{%
  \ifx\label\thmt@gobble@label 
    \expandafter\@firstoftwo
  \else
    \expandafter\@secondoftwo
  \fi
}
\newcommand{\RestateRemark}{\IfRestateTF{{\normalfont\bfseries (Restated)}}{}}
\title{An Enumerative Perspective on Connectivity}
\begin{document}
	
	\maketitle
	
	\begin{abstract}
		Connectivity (or equivalently, unweighted maximum flow) is an important measure in graph theory and combinatorial optimization.
		Given a graph $G$ with vertices $s$ and $t$, the connectivity $\lambda(s,t)$ from $s$ to $t$ is defined to be the maximum number of edge-disjoint paths from $s$ to $t$ in $G$. 
		Much research has gone into designing fast algorithms for computing connectivities in graphs.
		Previous work showed that it is possible to compute connectivities for all pairs of vertices in directed graphs with $m$ edges in $\tilde{O}(m^\omega)$ time [Chueng, Lau, and Leung, FOCS 2011], where $\omega \in [2,2.3716)$ is the exponent of matrix multiplication.
		For the related problem of computing ``small connectivities,'' it was recently shown that for any positive integer $k$, we can compute 
		$\min(k,\lambda(s,t))$ for all pairs of vertices $(s,t)$ in a directed graph with $n$ nodes in $\tilde{O}((kn)^\omega)$ time [Akmal and Jin, ICALP 2023].
		
		In this paper, we present an alternate exposition of these $\tilde{O}(m^\omega)$ and $\tilde{O}((kn)^\omega)$ time algorithms, with simpler proofs of correctness.
		Earlier proofs were somewhat indirect, introducing an elegant but ad hoc ``flow vector framework'' for showing correctness of these algorithms.
		In contrast, we observe that these algorithms for computing exact and small connectivity values can be interpreted as testing whether certain generating functions enumerating families of edge-disjoint paths are nonzero. 
		This new perspective yields more transparent proofs, and ties the approach for these problems 
		more closely to the literature surrounding algebraic graph algorithms.
	\end{abstract}

	\section{Introduction}
	\label{sec:intro}
	
	Many problems in graph algorithms involve quantifying how ``connected'' different parts of a network are.
	One concept developed to study these questions is that of connectivity:
	given a graph $G$ with vertices $s$ and $t$, the \emph{connectivity} from $s$ to $t$, denoted by $\lambda(s,t)$, is the maximum number of edge-disjoint paths from $s$ to $t$ in $G$.
Connectivity is an old and well-studied measure in graph theory, important in computer science because the connectivity $\lambda(s,t)$ is equal to the \emph{maximum flow} value from $s$ to $t$ in the unweighted graph $G$.
Consequently, significant research has gone into designing fast algorithms for computing connectivities.

Suppose the graph $G$ has $n$ vertices and $m$ edges.
Computing an individual connectivity value in $G$ is easy: since the maximum flow between a fixed pair of nodes can be found in almost-linear time \cite{almost-linear-maxflow}, we know that for any fixed pair of vertices $(s,t)$ in $G$, we can compute $\lambda(s,t)$ in almost-optimal $m^{1 + o(1)}$ time.
For many applications however, knowing a single connectivity is not so useful, and it is instead far more informative to know the values of \emph{multiple} connectivities in a graph.

This motivates the \textsf{All-Pairs Connectivity (APC)} problem, where we are tasked with computing connectivities for all pairs of vertices in a given graph.
A long line of work recently culminated in a near-optimal $\tilde{O}(n^2)$ time algorithm for solving \textsf{APC} over \emph{undirected} graphs \cite{near-quadratic-gomory-hu}.
Throughout this paper, we focus on the general case where $G$ is directed.

\begin{framed} \noindent \textsf{All-Pairs Connectivity (APC)}\\
	Given a directed graph $G$, compute $\lambda(s,t)$ for all pairs of vertices $(s,t)$ in $G$. 
\end{framed}

We can of course solve \textsf{APC} naively in $n^2m^{1+o(1)}$ time, simply by solving a separate instance of maximum flow for each pair of vertices.
In dense graphs, this naive approach is actually the fastest known algorithm for \textsf{APC}!
In slightly sparse graphs however, we can do better and solve  \textsf{APC} in $\tilde{O}(m^\omega)$ time \cite{CheungLauLeung2013}, where $\omega$ is the exponent of matrix multiplication (i.e., $\omega$ is the smallest positive real such that two $a\times a$ matrices can be multiplied using $a^{\omega+o(1)}$ arithmetic operations).
The current fastest algorithms for matrix multiplication imply that $\omega < 2.3716$ \cite{mmult-life}.
If $\omega = 2$, then the $\tilde{O}(m^\omega)$ time algorithm is always at least as fast as the naive approach.
The lack of progress in finding faster algorithms for \textsf{APC} has motivated researchers to consider relaxations of \textsf{APC}, including the \textsf{$k$-Bounded All-Pairs Connectivity ($k$-APC)} problem:
\begin{framed} \noindent \textsf{$k$-Bounded All-Pairs Connectivity Problem ($k$-APC)}\\
	Given a directed graph $G$, compute $\min(k,\lambda(s,t))$ for all pairs of vertices $(s,t)$ in $G$. 
\end{framed}

The \textsf{$k$-APC} problem is relevant in contexts where knowing the precise connectivity values between ``well-connected'' nodes is not important, and  instead we care more  about distinguishing for each pair of vertices whether its connectivity is small or large (where $k$ is our cutoff for what counts as ``small'' and ``large'').
Since the connectivity between any pair of nodes in $G$ is at most $n-1$, the \textsf{$k$-APC} problem
recovers the general \textsf{APC} problem when $k=n-1$.
As $k$ gets smaller, \textsf{$k$-APC} 
intuitively becomes easier. 
Indeed, it was recently shown that for any positive integer $k$, \textsf{$k$-APC} can be solved in $\tilde{O}((kn)^\omega)$ time \cite{k-APC}.

In summary, the following results are known for the \textsf{APC} and \textsf{$k$-APC} problems.

\begin{restatable}{theorem}{apc}\RestateRemark
	\label{apc}
	There is an algorithm solving \textsf{APC} in $\tilde{O}(m^\omega)$ time.
\end{restatable}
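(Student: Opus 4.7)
The plan is to reduce \textsf{APC} to fast matrix algebra on an $O(m) \times O(m)$ matrix whose entries are formal variables, exploiting a Lindstr\"om--Gessel--Viennot (LGV) type identity so that each connectivity $\lambda(s,t)$ is read off as the rank of a block of a precomputed matrix inverse. The core of the new perspective I would use is that the relevant determinants are precisely the generating functions enumerating $k$-tuples of edge-disjoint $s$-$t$ paths, so nonvanishing of a determinant is equivalent, by the LGV correspondence, to $\lambda(s,t) \ge k$.

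First, I would associate to each edge $e$ a formal indeterminate $x_e$, and build the $m \times m$ edge-to-edge transition matrix $A$ whose $(e,f)$ entry is $x_f$ when $\head(e) = \tail(f)$ and $0$ otherwise. The entries of $(I - A)^{-1}$ are then formal power series enumerating weighted walks between specified edges (with the usual ``$I - A$'' algebra encoding concatenation along a head-to-tail adjacency).

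Next, I would invoke the edge-variant of the LGV lemma: for any vertices $s,t$ and any sets of $k$ out-edges $\{e_1, \dots, e_k\}$ of $s$ and $k$ in-edges $\{f_1, \dots, f_k\}$ of $t$, the determinant of the $k \times k$ submatrix of $(I - A)^{-1}$ indexed by these boundary edges equals the signed generating function $\sum_{\cl F} \operatorname{sgn}(\cl F)\, w(\cl F)$, where $\cl F$ ranges over $k$-tuples of walks linking the chosen out-edges to some permutation of the in-edges. A sign-reversing involution that swaps at the first shared edge cancels all contributions from walk-families that are not edge-disjoint paths, leaving exactly the generating function over genuine edge-disjoint path systems. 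Since this polynomial is nonzero iff some edge-disjoint system exists, $\lambda(s,t)$ equals the largest $k$ for which some such $k \times k$ submatrix has nonzero determinant, i.e.\ the rank of the block of $(I - A)^{-1}$ indexed by out-edges of $s$ and in-edges of $t$.

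Algorithmically, I would substitute uniformly random field elements for the $x_e$ over a field of size $\poly(m)$, invoking Schwartz--Zippel to preserve all non-vanishings simultaneously with high probability. A single fast matrix inversion computes $(I - A)^{-1}$ in $\tilde{O}(m^\omega)$ time; for each pair $(s,t)$, the connectivity $\lambda(s,t)$ is then the rank of the corresponding $\operatorname{outdeg}(s) \times \operatorname{indeg}(t)$ block. These rank computations are amortized across all pairs and fit within the same $\tilde{O}(m^\omega)$ budget. The main obstacle will be justifying the LGV-type identity in the \emph{directed} setting with \emph{edge}-disjoint (rather than the classical vertex-disjoint) paths, particularly when $G$ has cycles: one must set up the sign-reversing involution so that walks sharing vertices but no edges survive, and ensure $(I - A)^{-1}$ is well-defined as a formal power series. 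Framing everything through the generating-function lens--- the paper's stated contribution--- should make the involution and the matching between determinant, rank, and Menger's theorem transparent, which is precisely where earlier ``flow vector'' arguments were ad hoc.
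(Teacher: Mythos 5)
Your overall architecture matches the paper's (symbolic edge-adjacency matrix, walk-enumerating entries of $(I-X)^{-1}$, an LGV-style suffix-swapping cancellation, connectivity as the rank of the block indexed by $\eout(s)\times\ein(t)$, then random evaluation and one $\tilde O(m^\omega)$ inversion plus a degree-based amortization of the rank computations). But there is a genuine gap in your choice of weights. You put one indeterminate $x_e$ per \emph{edge}, with $A[e,f]=x_f$ when $\head(e)=\tail(f)$. After your involution cancels the non-edge-disjoint families, what survives is a \emph{signed} (or, in characteristic two, mod-$2$) sum over edge-disjoint path systems, and with per-edge variables two distinct edge-disjoint systems that use the same set of edges but pair up sources and sinks by different permutations get the \emph{same} monomial with opposite signs, so they cancel each other. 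Concretely, take $s\to a$, $s\to b$, $a\to v$, $b\to v$, $v\to c$, $v\to d$, $c\to t$, $d\to t$: here $\lambda(s,t)=2$ via two paths that share the vertex $v$ but no edge, yet the $2\times 2$ block of $(I-A)^{-1}$ indexed by $\eout(s),\ein(t)$ has determinant $x_{(a,v)}x_{(v,c)}x_{(c,t)}\,x_{(b,v)}x_{(v,d)}x_{(d,t)}-x_{(a,v)}x_{(v,d)}x_{(d,t)}\,x_{(b,v)}x_{(v,c)}x_{(c,t)}=0$ identically, so your algorithm would report $1$. This is exactly where classical LGV fails for edge-disjoint paths in general digraphs: nonvanishing of the determinant is only equivalent to existence of a disjoint system if no two surviving systems can share a monomial.

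The paper's fix, and what your proposal is missing, is to refine the weights: introduce one variable $x_{ef}$ for each \emph{pair of consecutive edges} ($\head(e)=\tail(f)$), so a walk's weight records its transitions rather than its edge set. Then an edge-disjoint family of paths is uniquely recoverable from its monomial (each path is reconstructed from the transition variables it contributes), so its monomial appears with coefficient $1$ and cannot be cancelled by any other family; in the crossing example above the two systems now yield the distinct monomials $x_{e_1(a,v)}x_{(a,v)(v,c)}\cdots$ and $x_{e_1(a,v)}x_{(a,v)(v,d)}\cdots$. With that change the rest of your plan goes through essentially as in the paper (note also that Menger's theorem is not needed anywhere, and that the random-evaluation step should be phrased as rational identity testing on $\det(\adj(I-X))[S,T]/\det(I-X)^{r}$, since the entries being tested are ratios of polynomials of degree $O(nm)$ rather than polynomials).
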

\begin{restatable}{theorem}{kapc}\RestateRemark
	\label{kapc}
	There is an algorithm solving \textsf{$k$-APC} in $\tilde{O}((kn)^\omega)$ time.
\end{restatable}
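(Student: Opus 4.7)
I would prove \Cref{kapc} by encoding the question ``is $\lambda(s,t) \geq j$?'' as the nonvanishing of a multivariate polynomial that serves as a generating function for $j$-tuples of edge-disjoint $(s,t)$-paths. Specifically, I would associate a formal variable $x_e$ with each edge $e$ of a suitably modified graph, build a matrix $W$ over the field of rational functions in these variables whose $(u,v)$ entry is a generating function of walks from $u$ to $v$, and argue that $\rank(M_{s,t}) = \min(k,\lambda(s,t))$ for a canonical $k \times k$ submatrix $M_{s,t}$ of $W$. Testing this rank by substituting random field values and computing a single matrix inverse then yields the runtime.

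\textbf{Construction and main lemma.} To avoid the cancellation issues that plague direct LGV-style arguments on general (non-DAG) digraphs, I would first perform a ``channel blowup'': replace each vertex $v$ by $k$ in-copies and $k$ out-copies joined by $k$ parallel internal edges, each decorated with a distinct formal variable. Let $A$ be the weighted adjacency matrix of the resulting blowup $G'$ and set $W = (I-A)^{-1}$, which is well-defined over the rational function field. For each pair $(s,t)$, let $M_{s,t}$ be the $k \times k$ block of $W$ indexed by the out-copies of $s$ and the in-copies of $t$. The key lemma I would prove is that $\rank(M_{s,t}) = \min(k,\lambda(s,t))$. The upper bound $\rank(M_{s,t}) \leq \lambda(s,t)$ follows from a Menger-cut argument: any $(s,t)$-cut of size $c$ forces every $s$-to-$t$ walk to pass through at most $c$ middle vertices, so the corresponding block of $W$ factors as a product of matrices of inner dimension $c$. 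The lower bound is where the enumerative perspective shines: a Menger-maximum family of $j = \min(k,\lambda(s,t))$ edge-disjoint paths, routed through distinct channels at each shared intermediate vertex, contributes a unique ``witness monomial'' to the $j \times j$ minor of $M_{s,t}$ selecting the chosen channels, and the blowup is engineered so that no other term in the LGV-style expansion can produce that exact monomial.

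\textbf{Algorithm and main obstacle.} Given the lemma, the algorithm is straightforward: substitute independent uniform random values from a field of size $\poly(kn)$ for the variables $x_e$, compute $(I-A)^{-1}$ once over that field in $\tilde{O}((kn)^\omega)$ time via fast matrix inversion, and read off $\min(k,\lambda(s,t)) = \rank(M_{s,t})$ for each pair $(s,t)$ in $O(n^2 k^\omega)$ additional time, which fits within the target budget. By the Schwartz--Zippel lemma, the random substitution preserves all relevant ranks with high probability. I expect the main obstacle to be proving the lower bound $\rank(M_{s,t}) \geq \min(k,\lambda(s,t))$ in a clean, transparent way: one must exhibit a witness monomial in some $j \times j$ minor that cannot be cancelled by any other tuple of walks in the signed LGV expansion, including walks that traverse cycles rather than simple paths. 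The appeal of the enumerative viewpoint is that the channel blowup is intended to reduce this potentially delicate cancellation analysis to transparent combinatorial bookkeeping about which channels are used at each vertex, thereby replacing the earlier ad hoc ``flow vector'' argument with a direct generating-function identity.
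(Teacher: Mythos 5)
There is a genuine gap, and it sits exactly where your plan is vaguest: the ``channel blowup'' and the claim that its vertex-indexed walk matrix $W=(I-A)^{-1}$ has dimension $O(kn)$ while $\rank(M_{s,t})=\min(k,\lambda(s,t))$. You never say how the original edges of $G$ attach to the $k$ in-copies and $k$ out-copies, and every natural choice breaks one side of the argument. If each original edge $(u,v)$ is replicated so that it can enter any in-copy of $v$ (or leave any out-copy of $u$), then two ``vertex-disjoint'' paths in the blowup may use two copies of the \emph{same} original edge, so the rank measures a vertex-capacitated flow rather than $\min(k,\lambda(s,t))$; already for $k=2$ and the path $s\to a\to t$ one gets rank $2$ while $\lambda(s,t)=1$. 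If instead each original edge attaches to one fixed copy, the lower bound dies: two edge-disjoint paths of $G$ entering a shared vertex via edges wired to the same in-copy are not vertex-disjoint in the blowup, and your LGV-at-vertices cancellation can kill the witness. The fix that makes vertex-disjointness faithfully encode edge-disjointness (a per-edge bottleneck node) inflates the matrix to dimension $\Theta(m+kn)$, and then the single inversion costs $\tilde{O}((m+kn)^\omega)$, not $\tilde{O}((kn)^\omega)$. The paper resolves precisely this tension differently: it keeps the matrix \emph{edge-indexed} (so edge-disjointness is enforced by the determinant cancellation already proved for \textsf{APC}), realizes the channel choice purely algebraically via the substitution $x_{ef}=\sum_{j=1}^{k}y_{ej}z_{jf}$ turning $X$ into the low-rank product $YZ$ (\Cref{lem:uptok-nonzero}), uses the identity $(I-YZ)^{-1}=I+Y(I-ZY)^{-1}Z$ (\Cref{lem:reroute}) so that only a $kn_{\text{new}}\times kn_{\text{new}}$ matrix is ever inverted, and applies a separate degree-reduction gadget (\Cref{small:conn}) solely to make the per-pair rank queries $k\times k$. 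Your proposal has no analogue of the low-rank substitution or the geometric-series identity, which is exactly what lets an $m\times m$ enumeration be queried at $(kn)^\omega$ cost.

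Two further points. First, the step you yourself flag as the ``main obstacle''---that the witness monomial of a routed path family cannot be cancelled by other walk families, including ones traversing cycles---is the crux and is asserted, not proved; in the paper it is discharged by the suffix-swapping cancellation for edge-sharing walk families (\Cref{lem:lgv}), the walk-to-path conversion (\Cref{lem:walk-to-path}), and then the unique-recovery argument for the monomial $\prod_i\prod_{(e,f)}y_{ei}z_{if}$, which crucially uses $r\le k$ and the edge-indexed setting. Second, your upper bound $\rank(M_{s,t})\le\lambda(s,t)$ via a Menger-cut factorization is a different route from the paper (which deliberately avoids Menger: when fewer than $r$ edge-disjoint paths exist there are no $r$ edge-disjoint walks, so every $r\times r$ minor is identically zero), and as sketched it is not correct: a walk does not pass through ``at most $c$ middle vertices''; it crosses the $c$ cut edges possibly many times, and one must factor the power-series matrix through, say, the last crossing to get a rank-$c$ factorization. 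That argument can be made to work, but it needs to be written, and it reintroduces exactly the machinery the enumerative approach was meant to replace.
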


The algorithms establishing \Cref{apc,kapc} are straightforward.
However, the proofs of correctness for these algorithms presented in the literature are not at all obvious, 
and involve
arguments in a somewhat complicated ``flow vector framework.'' 

In this note, we present simpler proofs of correctness for the \textsf{APC} and \textsf{$k$-APC} algorithms from \cite{CheungLauLeung2013} and \cite{k-APC} respectively.

\subsubsection*{Comparison With Previous Work}

The algorithms for \textsf{APC} and \textsf{$k$-APC} from \cite{CheungLauLeung2013} and \cite{k-APC} work in similar ways.
Each algorithm first constructs a certain random matrix $M$ whose rows and columns are indexed by edges of $G$.
Then, for each pair of vertices $(s,t)$, the algorithms return the value of 
\[\rank~M[\eout(s),\ein(t)]\]
as the answer for that pair, where $\eout(s)$ is the set of edges exiting $s$ and $\ein(t)$ is the set of edges entering $t$.
To prove correctness, one simply needs to show that with high probability, for every pair of vertices $(s,t)$ the rank expression above equals $\lambda(s,t)$  or $\min(k,\lambda(s,t))$, depending on whether $M$ was designed to solve \textsf{APC} or \textsf{$k$-APC} respectively.

Previous proofs of correctness for these algorithms use the flow vector framework.
In this framework, we fix a source node $s$, and 
imagine pumping out random vectors along the edges exiting $s$.
Intuitively, we let these vectors propagate throughout the graph and use them to assign vectors to each edge of $G$ in a manner that satisfies certain ``flow conservation'' rules.
One can then argue that for any vertex $s$ and edge $e$, the column vector $M[\eout(s), e]$ equals the vector assigned to edge $e$ in $G$.
This interpretation of the entries of $M$ then allows one to establish the connection between ranks of submatrices of $M$ and connectivity.

See \cite[Section 3]{k-APC} for a more detailed overview of the approach in previous work.

In this work, we present alternate proofs of correctness for these \textsf{APC} and \textsf{$k$-APC} algorithms.
Our proofs 
provide a combinatorial interpretation of determinants of submatrices of $M$ as \emph{generating functions} enumerating families of edge-disjoint walks in $G$.
This approach lets us directly connect 
$\rank M[\eout(s),\ein(t)]$ to $\lambda(s,t)$, without introducing any of the auxiliary scaffolding of the flow vector framework.

This new perspective yields simpler and more direct proofs of correctness for the \textsf{APC} and \textsf{$k$-APC} algorithms than what was previously presented in the literature.
For example, our proof does not even use Menger's theorem (the fact that $\lambda(s,t)$ is equal to the minimum number of edge deletions needed to disconnect $t$ from $s$), which previous proofs relied on.
From a pedagogical perspective, the new proofs are more transparent, making it clear how and \emph{why} matrix rank relates to connectivities, so that the exposition of the  
\textsf{APC} and \textsf{$k$-APC} algorithms in this paper should be easier to teach and motivate compared to previous work.

Finally, our generating function approach also yields somewhat more expressive results than the flow vector framework, leading to 
an easier proof of correctness for the \textsf{$k$-APC} algorithm in particular.
Previously in \cite[Section 4]{k-APC}, to prove correctness of the \textsf{$k$-APC} algorithm the authors had to manually reprove ``low-rank'' versions of all the flow vector framework results shown by \cite{CheungLauLeung2013} for the general \textsf{APC} problem.
This is not necessary in our approach: once we establish our generating function for edge-disjoint walks in $G$, some small additional reasoning yields both the \textsf{APC} and \textsf{$k$-APC} algorithms.

The only technical wrinkle in our approach is that our combinatorial view of the problem involves manipulating formal power series.
However, as we discuss later in \Cref{remark:DAGs}, even this ingredient can be removed if one only wishes to establish \Cref{apc,kapc} over directed acyclic graphs (which is already an interesting result, perhaps more suitable for teaching these algorithms in a classroom setting).

\subsection*{Organization}
In \Cref{sec:prelim} we identify notation and assumptions used throughout the paper.
In \Cref{sec:formal-power-series} we review standard definitions  and properties of formal power series.
In \Cref{sec:enumeration}, we construct a matrix of formal power series whose entries enumerate families of edge-disjoint walks in a graph.
In \Cref{sec:APC,sec:k-APC} we leverage the construction from \Cref{sec:enumeration} to give simple proofs of \Cref{apc,kapc} respectively. 
We conclude in \Cref{sec:conclusion} by mentioning some connections between our arguments and classical results in combinatorics and computer science, and highlighting open problems related to computing connectivities.

\section{Preliminaries}
\label{sec:prelim}

\paragraph*{General Notation}

Given a positive integer $a$, we let $[a] = \set{1, \dots, a}$ denote the set of the first $a$ consecutive positive integers.

\paragraph*{Graph Assumptions and Notation}

Throughout, we let $G$ denote the input graph, with $n$ nodes and $m$ edges.
We let $V$ and $E$ denote the vertex and edge sets of $G$ respectively.
We assume that $G$ is weakly connected (i.e., the underlying undirected graph of $G$ is connected), so that $n-1\le m$.
This is without loss of generality: if $G$ is disconnected, we can solve \textsf{APC} and \textsf{$k$-APC} on $G$ by solving these problems separately on each weakly connected component of $G$.
We assume that $G$ is simple (i.e., $G$ does not have self-loops or parallel edges between nodes).

For a directed edge $e = (u,v)\in E$ going from $u$ to $v$, we let $\tail(e) = u$ and $\head(e) = v$ denote the vertices that $e$ exits and enters respectively. 

A \emph{walk} in $G$ is a sequence of edges $W = \langle e_1, \dots, e_\ell\rangle$ such that $\head(e_j) = \tail(e_{j+1})$ for each $j\in[\ell-1]$.
We say $W$ is a walk starting at $e_1$ and ending at $e_\ell$.
We say $W$ is a \emph{path} if no two of its edges enter the same vertex, and its starting $s=\tail(e_1)$ and ending $t=\head(e_\ell)$ vertices are distinct.
Such a path from $s$ to $t$ is referred to as an $st$-path.

\paragraph*{Finite Field Computation}
Throughout, we work over a finite field $\mathbb{F} = \mathbb{F}_{2^q}$ of characteristic two.
We set $q = \Theta(\log n)$ large enough so that the field has size $2^q\ge 12n^6$ (looking ahead, this is to ensure that our algorithms work with high probability).
We can perform arithmetic operations over this field in $q^{1+o(1)} = \poly(\log n)$ time.

\paragraph*{Matrix Notation}

Given a matrix $M$, for any row index $i$ and column index $j$ we let $M[i,j]$ be the $(i,j)$ entry of $M$.
Given subsets $I$ and $J$ of row and column indices respectively, we let $M[I,J]$ be the submatrix of $M$ restricted to rows in $I$ and columns in $J$.
We also let $M[I,\cdot]$ be the submatrix restricted to rows in $I$ and all columns, and $M[\cdot,J]$ be the submatrix on all rows and restricted to columns in $J$.
We let $\rank M$ denote the rank of $M$, defined to be largest nonnegative integer $r$ such that $M$ contains an $r\times r$ submatrix with nonzero determinant.
When $M$ is a square matrix, we let $\det M$ denote the determinant of $M$, $\adj(M)$ denote the adjugate of $M$, and $M^{-1}$ denote the inverse of $M$ (if it exists).

\paragraph*{Matrix Computation}

We recall the following results concerning matrix computation.

\begin{proposition}[Matrix Inversion]
	\label{matrix:inverse}
	For any positive integer $a$, we can compute the inverse of an $a\times a$ matrix over a field in $O(a^\omega)$ field operations.
\end{proposition}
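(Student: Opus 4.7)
The plan is to prove this classical result by reducing matrix inversion to matrix multiplication via a block recursion, following the approach of Strassen and Bunch--Hopcroft.

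First I would partition the input $a\times a$ matrix $M$ into four blocks of size $(a/2)\times(a/2)$,
\[
M = \begin{pmatrix} A & B \\ C & D \end{pmatrix}.
\]
If the pivot block $A$ is invertible, the Schur complement identity gives the closed form
\[
M^{-1} = \begin{pmatrix} A^{-1} + A^{-1}B S^{-1} C A^{-1} & -A^{-1}B S^{-1} \\ -S^{-1} C A^{-1} & S^{-1} \end{pmatrix}, \qquad S = D - C A^{-1} B.
\]
Assembling $M^{-1}$ from this formula requires inverting the two $(a/2)\times(a/2)$ matrices $A$ and $S$, together with a constant number of products of $(a/2)\times(a/2)$ matrices. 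Each product costs $O((a/2)^\omega)$ field operations by definition of $\omega$, so the recurrence for the inversion cost is $T(a) \le 2T(a/2) + O(a^\omega)$, which solves to $T(a) = O(a^\omega)$ as required.

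The main obstacle is that the Schur-complement formula above assumes the pivot block $A$ is invertible, and this can fail even when $M$ itself is invertible. The standard way to circumvent this is to first compute an LUP decomposition $M = LUP$ of $M$, which itself can be produced in $O(a^\omega)$ field operations by a structurally analogous block recursion (again due to Bunch and Hopcroft). Triangular matrices have the property that every leading principal submatrix is itself triangular with the same diagonal entries, so whenever $L$ and $U$ are invertible, every pivot block encountered during the recursion is automatically invertible. Thus the block formula above applies cleanly to compute $L^{-1}$ and $U^{-1}$ in $O(a^\omega)$ time, and then $M^{-1} = P^{-1} U^{-1} L^{-1}$ is obtained with one further $O(a^\omega)$ matrix multiplication, completing the proof.
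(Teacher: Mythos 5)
The paper does not actually prove this proposition; it is stated as a known fact with a pointer to the cited lecture notes, and the argument there is exactly the one you reproduce: Strassen's block-recursive inversion via the Schur complement, combined with a Bunch--Hopcroft LUP factorization so that every pivot block encountered is guaranteed invertible over an arbitrary field. Your write-up is correct; the only caveat worth noting is the standard one that, since $\omega$ is defined as an infimum, each multiplication costs $a^{\omega+o(1)}$ rather than literally $O(a^\omega)$ (and the recurrence $T(a)\le 2T(a/2)+O(a^\omega)$ acquires a logarithmic factor in the degenerate case $\omega=2$), a slack that both the proposition's phrasing and the paper's $\tilde{O}$ runtime bounds already absorb.
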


\begin{proposition}[Matrix Rank]
	\label{matrix:rank}
	For any positive integers $a$ and $b$, we can compute the rank of an $a\times b$ matrix over a field in $O(ab^{\omega-1})$ field operations.
\end{proposition}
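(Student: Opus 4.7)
Proof plan. The plan is to reduce rank computation on an $a\times b$ matrix to $\lceil b/a\rceil$ calls to a square-matrix subroutine that operates on matrices of dimension $a\times O(a)$ and costs $O(a^\omega)$ field operations each.

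Without loss of generality I would assume $a\le b$; otherwise, the matrix may be transposed (a symbolic relabeling that preserves rank) to reduce to this case. In this regime I claim rank can be computed in $O(b\cdot a^{\omega-1})$ field operations, which is at most $O(ab^{\omega-1})$ since $a\le b$ and $\omega\ge 2$ together imply $a^{\omega-2}\le b^{\omega-2}$, so that $ba^{\omega-1}=ab\cdot a^{\omega-2}\le ab\cdot b^{\omega-2}=ab^{\omega-1}$.

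Next I would partition the $b$ columns of the input matrix into $\lceil b/a\rceil$ consecutive blocks $B_1,\dots,B_\ell$, each an $a\times a$ submatrix (zero-padding the last block if necessary). Processing blocks left to right, I maintain an $a\times r_i$ matrix $U_i$ whose columns form a basis for the column span of $[B_1\mid\dots\mid B_i]$, so that $r_i$ is the rank of the partial matrix and the final answer is $r_\ell$. To extend from $U_{i-1}$ to $U_i$, I form the $a\times (r_{i-1}+a)$ matrix $[U_{i-1}\mid B_i]$ (which has at most $2a$ columns) and apply a fast LUP-type decomposition to extract a maximal linearly independent subset of its columns. Each such step costs $O(a^\omega)$ operations, so the whole procedure costs $O(\lceil b/a\rceil\cdot a^\omega)=O(b a^{\omega-1})$, as claimed.

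The main obstacle is this square-matrix subroutine: computing an LUP decomposition of an $a\times O(a)$ matrix over a field in $O(a^\omega)$ operations. This is a classical result of Bunch and Hopcroft, whose algorithm recurses on the pivot rows and reduces the elimination work to fast matrix multiplication; I would invoke it as a black box rather than reprove it here. Alternatively, one can derive this building block from \Cref{matrix:inverse} by standard reductions. Given this subroutine, correctness of the overall procedure reduces to two elementary facts: a maximal linearly independent subset of columns of $[U_{i-1}\mid B_i]$ is also a basis for the column span of $[B_1\mid\dots\mid B_i]$, and the determinantal notion of rank used in \Cref{matrix:rank} coincides with the dimension of the column span.
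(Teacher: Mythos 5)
The paper does not actually prove \Cref{matrix:rank}: it cites the literature for it (just as it does for \Cref{matrix:inverse}), so there is no internal proof to compare against, and invoking a known fast-rank result as a black box is entirely in the spirit of the paper. Your reduction is correct in outline: transposing so that $a\le b$, scanning the columns in blocks of $a$, and maintaining a column basis via a rank-revealing factorization of an $a\times O(a)$ matrix costs $O(\lceil b/a\rceil a^\omega)=O(ba^{\omega-1})\le O(ab^{\omega-1})$ operations, and the two ``elementary facts'' you defer (a maximal independent subset of columns of $[U_{i-1}\mid B_i]$ spans the first $i$ blocks, and determinantal rank equals column-space dimension) are indeed routine. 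Two caveats. First, the attribution of the black box is slightly off: the Bunch--Hopcroft LUP algorithm is designed for matrices of full (row) rank, whereas what you need is a factorization of a possibly singular $a\times O(a)$ matrix that exposes a maximal independent set of columns in $O(a^\omega)$ operations; that is the LSP-type generalization of Ibarra, Moran, and Hui (or an equivalent column rank-profile/echelon computation). Since that generalization already computes the rank of a rectangular matrix within the claimed bound directly, your block scan is really a bootstrap from the (near-)square case to the rectangular case --- legitimate, but the substance of the proposition still resides in the cited black box, exactly as in the paper. Second, the remark that the subroutine ``can be derived from \Cref{matrix:inverse} by standard reductions'' should be dropped or justified: inversion alone does not obviously yield the rank of a singular matrix deterministically over an arbitrary field, and the proposition as stated is deterministic and field-agnostic.
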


Proofs of \Cref{matrix:inverse,matrix:rank} can be found in \cite{lecture-notes-inversion} and 
\cite{matrix-rank} respectively. 

\paragraph*{Identity Testing}

To prove correctness of the \textsf{APC} and \textsf{$k$-APC} algorithms, we  use the fact that random evaluations of a low degree polynomial (or more generally, rational function whose numerator and denominator have low degree) over a large field are nonzero with high probability.

\begin{proposition}
	\label{prop:ore}
	Let $P$ be a nonzero $r$-variate polynomial of degree at most $d$.
	Then a uniform random evaluation of $P$ over $\mathbb{F}^r$ is nonzero with probability at least $1-d/|\mathbb{F}|$.
\end{proposition}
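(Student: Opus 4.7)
The plan is to prove this by induction on the number of variables $r$; this is the classical Schwartz--Zippel--DeMillo--Lipton lemma, and the induction is essentially the only idea needed.

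For the base case $r=1$, I would appeal to the fact that a nonzero univariate polynomial of degree at most $d$ over any field has at most $d$ roots. Hence a uniform random element of $\mathbb{F}$ is a root with probability at most $d/|\mathbb{F}|$, matching the bound.

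For the inductive step, I would write $P$ as a polynomial in $x_1$ with coefficients in $\mathbb{F}[x_2,\dots,x_r]$, say
\[
P(x_1,\dots,x_r) \;=\; \sum_{i=0}^{k} x_1^{i}\, P_i(x_2,\dots,x_r),
\]
where $k$ is the largest exponent of $x_1$ appearing with a nonzero coefficient, so that $P_k$ is a nonzero $(r-1)$-variate polynomial of degree at most $d-k$. Sampling $(a_1,\dots,a_r)\in\mathbb{F}^r$ uniformly, I would split into two cases. If $P_k(a_2,\dots,a_r) = 0$, then by the inductive hypothesis applied to $P_k$ this happens with probability at most $(d-k)/|\mathbb{F}|$. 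Otherwise $P(x_1,a_2,\dots,a_r)$ is a nonzero univariate polynomial in $x_1$ of degree exactly $k$, so by the base case its value at the independently chosen $a_1$ vanishes with probability at most $k/|\mathbb{F}|$. A union bound yields $(d-k)/|\mathbb{F}| + k/|\mathbb{F}| = d/|\mathbb{F}|$, completing the induction.

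There is no real obstacle here beyond recognizing that isolating the highest power of one variable is the right way to reduce to a smaller case; the argument then closes cleanly via a union bound. I would not attempt to extend the statement to rational functions, since the proposition as stated concerns only polynomials, and the rational-function setting in the surrounding text can be handled separately by applying this lemma to the numerator.
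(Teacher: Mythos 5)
Your proof is correct: it is the standard Schwartz--Zippel induction on the number of variables, isolating the highest power of $x_1$ and combining the inductive hypothesis with the univariate root bound via a union bound. The paper does not prove \Cref{prop:ore} itself but defers to a textbook reference, whose argument is essentially the one you give, so your proposal matches the intended proof.
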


For an accessible proof of \Cref{prop:ore}, see \cite[Theorem 7.2]{Motwani1995}.

\begin{corollary}[Rational Identity Testing]
	\label{cor:rat-test}
	Let $R = P/Q$ be a rational function, represented as the ratio of two nonzero polynomials $P$ and $Q$.
	Suppose $P$ and $Q$ each have degree at most $d$.
	If we assign each variable of $R$ an independent, uniform random element of $\mathbb{F}$, then $R$ has nonzero evaluation with probability at least $1-2d/|\mathbb{F}|$.
\end{corollary}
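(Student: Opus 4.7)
The plan is to reduce the rational case to the polynomial case by observing that a rational function $R = P/Q$ has a well-defined, nonzero evaluation at a point $x$ precisely when both the numerator and denominator are nonzero at $x$. Thus I want to control the bad events ``$P(x) = 0$'' and ``$Q(x) = 0$'' separately, each of which is already handled by \Cref{prop:ore}, and then combine them via a union bound.

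First I would apply \Cref{prop:ore} to $P$: since $P$ is a nonzero polynomial of degree at most $d$, a uniform random assignment over $\mathbb{F}^r$ satisfies $P(x) \neq 0$ with probability at least $1 - d/|\mathbb{F}|$. The same bound applies to $Q$. Next, by the union bound, the probability that at least one of $P(x), Q(x)$ equals zero is at most $2d/|\mathbb{F}|$, so with probability at least $1 - 2d/|\mathbb{F}|$ we have $P(x) \neq 0$ and $Q(x) \neq 0$ simultaneously. In that event, $Q(x)$ is an invertible element of the field $\mathbb{F}$, so $R(x) = P(x) \cdot Q(x)^{-1}$ is a well-defined element of $\mathbb{F}$; and because $\mathbb{F}$ has no zero divisors and $P(x) \neq 0$, we conclude $R(x) \neq 0$, giving the desired probability bound.

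There is essentially no real obstacle here: the statement follows from two invocations of \Cref{prop:ore} plus a union bound, once one clarifies what ``nonzero evaluation of a rational function'' should mean. The only point of care is that the probability bound is \emph{not} obtained by applying \Cref{prop:ore} to some single polynomial like $P \cdot Q$ (whose degree could be as large as $2d$, which would also give the bound $1 - 2d/|\mathbb{F}|$ but does not guarantee $Q(x) \neq 0$, only that their product is nonzero): the cleaner route is to treat numerator and denominator separately and apply the union bound, as described above.
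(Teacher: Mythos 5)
Your proof is correct and follows essentially the same route as the paper: two applications of \Cref{prop:ore} to $P$ and $Q$ separately, combined with a union bound, so that both are nonzero simultaneously with probability at least $1-2d/|\mathbb{F}|$ and hence $R = P/Q$ evaluates to a nonzero field element. The extra clarification about why one should not instead apply \Cref{prop:ore} to the product $P\cdot Q$ is a nice pedagogical touch but not a departure from the paper's argument.
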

\begin{proof}
	Under random evaluation over $\mathbb{F}$, by \Cref{prop:ore} and the union bound, $P$ and $Q$ are both nonzero with probability at least $1- 2d/|\mathbb{F}|$. 
	So with this probability, the rational function $R=P/Q$ also has nonzero evaluation, as claimed.
\end{proof}

\section{Power Series Preliminaries}
\label{sec:formal-power-series}
Our algorithms for computing connectivities work by constructing generating functions for families of edge-disjoint walks.
These generating functions involve infinite sums, so in this section we review properties of formal power series, a generalization of polynomials which allow for infinite sums.
The results we review are simple, and mostly involve observing that basic facts which hold for polynomials still hold in the infinite case of formal series.

We also note that if one is interested in solving \textsf{APC} and \textsf{$k$-APC} only in the special case of directed acyclic graphs, then it suffices to work with polynomials (no formal power series are needed) and this section can be skipped.
We discuss this simplification for acyclic graphs in detail later on in  \Cref{remark:DAGs}.

Fix a finite set $J$, and consider the set of variables $\set{x_j}_{j\in J}$ indexed by $J$.
A polynomial is a finite linear combination of products of these variables.
A formal power series is simply a generalization of polynomials which allows for infinite sums.

Let $\mathbb{N}^J$ be the set of all sequences of nonnegative integers indexed by $J$.
Given $\boldsymbol{d}\in \mathbb{N}^J$, we let $d_j$ denote the $j^{\text{th}}$ element of $\b{d}$ for each $j\in J$.
Then a \emph{formal power series} $F$ is identified by a sequence of coefficients $a_{\b{d}}$ in $\mathbb{F}$, one for each $\b{d}\in \mathbb{N}^J$, and we write 
\[F = \sum_{\b{d}\in\mathbb{N}^J} a_{\b{d}}\prod_{j\in J} x_j^{d_j}.\]
We let $\b{0}$ denote the all-zeros sequence in $\mathbb{N}^J$, and say $a_{\b{0}}$ is the \emph{constant term} of $F$.
In general, given $\b{d}\in\mathbb{N}^J$, the monomial corresponding to $\b{d}$ in $F$ (if it appears with nonzero coefficient $a_{\b{d}}\neq 0$) is said to have degree 
\[\sum_{j\in J} d_j.\]
Given formal series
\[F = \sum_{\b{d}\in\mathbb{N}^J} a_{\b{d}}\prod_{j\in J} x_j^{d_j}\quad\text{and}\quad
H = \sum_{\b{d}\in\mathbb{N}^J} b_{\b{d}}\prod_{j\in J} x_j^{d_j}\]
we define their sum 
\[F + H = \sum_{\b{d}\in\mathbb{N}^J} \grp{a_{\b{d}} + b_{\b{d}}}\prod_{j\in J} x_j^{d_j}\]
and product 
\begin{equation}
	\label{eq:product}
	F\cdot H = \sum_{\b{d}\in\mathbb{N}^J}
	\grp{\sum_{\substack{\b{d_1},\b{d_2}\in\mathbb{N}^J \\ \b{d_1} + \b{d_2} = \b{d}}} a_{\b{d_1}}b_{\b{d_2}} }
	\prod_{j\in J} x_j^{d_j}
\end{equation}
in the natural way, generalizing arithmetic over polynomials.
These operations make the set of polynomials over $\mathbb{F}$ a subring of the ring of formal power series (where the additive and multiplicative identities are the constant polynomials $0$ and $1$ respectively).

\begin{proposition}[Power Series Inversion]
	\label{prop:inverse}
	If $F$ is a formal power series with constant term $1$, then there is a unique formal series $H=F^{-1}$ such that $F\cdot H = 1$.
	Moreover, the constant term of $H$ equals $1$.
\end{proposition}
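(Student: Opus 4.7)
The plan is to construct $H$ by solving for its coefficients one at a time, inducting on total degree, which will simultaneously establish existence and uniqueness.

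Write $F = \sum_{\b{d}\in\mathbb{N}^J} a_{\b{d}} \prod_{j\in J} x_j^{d_j}$ with $a_{\b{0}} = 1$, and suppose $H = \sum_{\b{d}\in\mathbb{N}^J} b_{\b{d}} \prod_{j\in J} x_j^{d_j}$ is any formal series satisfying $F\cdot H = 1$. By \eqref{eq:product}, the condition $F\cdot H = 1$ translates to the coefficient equations
\[ \sum_{\b{d_1}+\b{d_2}=\b{d}} a_{\b{d_1}} b_{\b{d_2}} = [\b{d}=\b{0}], \]
one for each $\b{d}\in\mathbb{N}^J$. For $\b{d}=\b{0}$, this reads $a_{\b{0}} b_{\b{0}} = 1$, which forces $b_{\b{0}} = 1$ (using $a_{\b{0}} = 1$). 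For $\b{d}\neq \b{0}$, isolating the term $\b{d_1}=\b{0}$ gives
\[ b_{\b{d}} = -\sum_{\substack{\b{d_1}+\b{d_2}=\b{d} \\ \b{d_1}\neq\b{0}}} a_{\b{d_1}} b_{\b{d_2}}. \]
The key observation is that on the right side, every index $\b{d_2}$ satisfies $\sum_j (d_2)_j < \sum_j d_j$, since $\b{d_1}\neq\b{0}$ contributes positively to the total degree. So this formula expresses $b_{\b{d}}$ purely in terms of $b_{\b{d_2}}$ with strictly smaller total degree.

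I would then conclude both existence and uniqueness by induction on the total degree $\sum_j d_j$. Uniqueness is immediate: the recursion above shows every coefficient of any candidate $H$ is forced, starting from $b_{\b{0}}=1$. For existence, define $b_{\b{d}}$ by the recursion; this defines a legitimate formal power series $H$, and by construction its coefficients satisfy the equations $\sum_{\b{d_1}+\b{d_2}=\b{d}} a_{\b{d_1}} b_{\b{d_2}} = [\b{d}=\b{0}]$ for every $\b{d}$, so $F\cdot H = 1$. Finally, the constant term $b_{\b{0}}=1$ is built into the construction.

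I do not anticipate any real obstacle here; the only thing to be careful about is verifying that the recursion is well-founded, i.e.\ that $\b{d_1}\neq\b{0}$ genuinely forces $\b{d_2}$ to have strictly smaller total degree (so that induction on $\sum_j d_j$ goes through). Using $a_{\b{0}}=1$ avoids any need to invert elements of $\mathbb{F}$, and no convergence issues arise because each coefficient $b_{\b{d}}$ is a finite sum.
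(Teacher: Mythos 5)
Your proposal is correct and is essentially the paper's own argument: you define the coefficients of $H$ by the same recursion (your formula for $b_{\b{d}}$ is exactly the paper's \cref{eq:induct} after reindexing), and verify $F\cdot H=1$ coefficientwise. The only cosmetic difference is that you get uniqueness by noting the recursion forces every coefficient, while the paper uses the associativity argument $H'=H'\cdot(F\cdot H)=(H'\cdot F)\cdot H=H$; both are fine.
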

\begin{proof}
	Suppose 
	\[F = \sum_{\b{d}\in\mathbb{N}^J} a_{\b{d}}\prod_{j\in J} x_j^{d_j}.\]
	We define the sequence $b_{\b{d}}$ of coefficients in $\mathbb{F}$ for all $\b{d}\in\mathbb{N}^J$ inductively, by setting $b_{\b{0}} = 1$, and taking
	\begin{equation}
		\label{eq:induct}
		b_{\b{d}} = -\grp{
			\sum_{\b{d'}\prec \,\b{d}} 
			a_{\b{d}-\b{d'}}b_{\b{d'}}
		}
	\end{equation}
	where $\b{d}'\prec \b{d}$ means that  $\b{d'}\in\mathbb{N}^J$ is componentwise less than or equal to $\b{d}$, and $\b{d'}\neq \b{d}$.
	
	Then if we set 
	\[H = \sum_{\b{d}\in\mathbb{N}^J} b_{\b{d}}\prod_{j\in J} x_j^{d_j}\]
	it follows from the definition of multiplication in \cref{eq:product}, the relationship from \cref{eq:induct}, and the fact that $a_{\b{0}} = b_{\b{0}} = 1$, that we have 
	\[F\cdot H = 1.\]
	This inverse $H$ is unique, because if another formal series $H'$ satisfies $F\cdot H' = 1$, then the constant term of $H'$ is $1$ since the product of the constant terms of $F$ and $H'$ are $1$, and 
	\[H' = H'\cdot 1 = H'\cdot (F\cdot H) = (H'\cdot F)\cdot H = 1\cdot H = H.\]
	Thus $F$ has a unique multiplicative inverse.
\end{proof}

\paragraph*{Matrices of Formal Series}

In this paper, we work with matrices whose entries are formal power series.
Such matrices naturally arise when computing inverses of polynomial matrices.

\begin{proposition}[Geometric Series Formula]
	\label{prop:geo-series}
	Suppose $X$ is a square matrix with polynomial entries such that every entry of $X$ has constant term equal to zero.
	Then 
	\begin{equation}
		\label{eq:geo-series}
		(I-X)^{-1} = \sum_{\ell=0}^\infty X^\ell.
	\end{equation}
\end{proposition}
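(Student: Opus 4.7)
The plan is to verify the identity directly by a telescoping argument, after first confirming that the right-hand side is a well-defined matrix of formal power series.

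For well-definedness, I would observe that since every entry of $X$ has constant term zero, an easy induction on $\ell$ using the product formula \cref{eq:product} shows that every entry of $X^\ell$ is supported only on monomials of total degree at least $\ell$. Hence for any fixed $\b{d}\in\mathbb{N}^J$ of total degree $d = \sum_{j\in J} d_j$, only the matrices $X^0, X^1, \dots, X^d$ can contribute a nonzero coefficient to the monomial $\prod_{j\in J} x_j^{d_j}$ in the sum. So each entry of $\sum_{\ell=0}^\infty X^\ell$ is an unambiguously defined formal power series.

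Next I would carry out the telescoping computation. For any finite $N$, distributivity of matrix multiplication over addition yields
\[(I - X)\sum_{\ell=0}^N X^\ell = I - X^{N+1} = \grp{\sum_{\ell=0}^N X^\ell}(I-X).\]
Because the entries of $X^{N+1}$ contain only monomials of degree at least $N+1$, for each fixed monomial the coefficient on each side of each identity stabilizes at the corresponding entry of $I$ as soon as $N$ exceeds the degree of that monomial. Passing to the limit coefficient-wise therefore yields $(I-X)\cdot\sum_{\ell=0}^\infty X^\ell = I = \sum_{\ell=0}^\infty X^\ell\cdot(I-X)$. The standard ring-theoretic argument that a two-sided inverse in any ring is unique (mimicking the uniqueness part of \Cref{prop:inverse}) then implies that $\sum_{\ell=0}^\infty X^\ell$ is the inverse $(I-X)^{-1}$.

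The only delicate point is making sense of the infinite matrix sum, which must be interpreted coefficient-wise in the ring of formal power series. The degree-growth observation---that $X^\ell$ has no low-degree monomials when $\ell$ is large---is the essential input that makes both the well-definedness of the sum and the passage to the limit in the telescoping identity go through; everything else is a routine manipulation that mirrors the polynomial case.
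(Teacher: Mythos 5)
Your proof is correct and follows essentially the same route as the paper's: both establish well-definedness of the infinite sum via the observation that nonzero entries of $X^\ell$ have degree at least $\ell$, and then verify the identity by truncating at a finite degree and using the telescoping product $(I-X)\sum_{\ell=0}^N X^\ell = I - X^{N+1}$. Your additional check of the right-sided product and appeal to uniqueness of two-sided inverses is a minor (and harmless) extra step beyond what the paper writes out.
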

\begin{proof}
	Since every entry of $X$ has constant term zero, every nonzero entry of $X^\ell$ has degree at least $\ell$. 
	Consequently, the infinite sum from the right-hand side of \cref{eq:geo-series} is well-defined, because for any $\b{d}\in\mathbb{N}^J$, only finitely many terms contribute to the coefficient of 
	\[\prod_{j\in J}x_j^{d_j}\]
	in each entry of the sum.
	It suffices to prove that the product 
	\begin{equation}
		\label{eq:geo-product}
		(I-X)\left(\sum_{\ell=0}^\infty X^\ell\right)
	\end{equation}
	is equal to the identity matrix.
	
	For any fixed integer $d \ge 0$, let $M_d$ be the matrix from \cref{eq:geo-product} with entries restricted to terms of degree at most $d$.
	Then since nonzero entries of $X^\ell$ have degree at least $\ell$, we see that $M_d$ is equal to the matrix
	\[(I-X)\grp{\sum_{\ell=0}^d X^\ell} = \sum_{\ell=0}^d\grp{X^\ell - X^{\ell+1}} = I - X^{d+1}\]
	with entries restricted to terms of degree at most $d$.
	Since every nonzero entry of $X^{d+1}$ has degree greater than $d$, we see that $M_d = I$ is the identity matrix.
	Since this equation holds for every fixed $d\ge 0$, the product in \cref{eq:geo-product} equals the identity matrix as claimed. 
\end{proof}

\section{Enumeration with Edge-Adjacency Matrices}
\label{sec:enumeration}

In this section we
construct a   
matrix $\Gamma$ of formal power series, whose 
rows and columns are indexed by edges of $G$.
The matrix $\Gamma$ will be designed to have the special property that for any equal-size subsets of edges $S,T\sub E$, the determinant
\[\det~\Gamma[S,T]\]
is nonzero as a formal series if and only if there are edge-disjoint paths connecting the edges in $S$ to the edges in $T$.
Since connectivity is defined in terms of edge-disjoint paths, 
intuitively our
construction of  $\Gamma$ will let us solve the \textsf{APC} and \textsf{$k$-APC} problems in \Cref{sec:APC,sec:k-APC} by performing certain matrix computations.

For every pair of edges $(e,f)$ in $G$ such that $\head(e) = \tail(f)$ (i.e., edge $e$ enters the vertex that edge $f$ exits), we introduce an indeterminate variable $x_{ef}$.
Let $X$ be the $m\times m$ matrix with rows and columns indexed by edges of $G$, such that for each pair of edges $(e,f)$ in $G$, we have 
\begin{equation}
	\label{eq:X-def}
	X[e,f] = \begin{cases}
		x_{ef} & \text{if } \head(e)=\tail(f) \\
		0 & \text{otherwise.}
	\end{cases}
\end{equation}

We enumerate walks not by counting their number, but by assigning each walk a monomial weight, that records information about the edges traversed in the walk.
Enumeration for our purposes corresponds to summing the weights of all walks (or collections of walks) in a certain family of interest.

Given a walk $W = \langle e_1, \dots, e_{\ell}\rangle$, viewed as a sequence of edges $e_j$, we let the weight
\[\xi(W) = \prod_{j=1}^{\ell-1}x_{e_je_{j+1}}\]
of $W$ be the monomial $\xi(W)$ recording all pairs of consecutive edges traversed by $W$.
By convention, we assign a walk $W$ of length one (i.e., a single edge) weight $\xi(W) = 1$.
More generally, given a collection of walks $\cl{C} = \langle W_1, \dots, W_r\rangle$ we let the weight 
\[\xi(\cl{C}) = \prod_{j=1}^r \xi(W_j)\]
of $\cl{C}$ be the product of the weights of the individual walks.

\subsubsection*{Enumerating Walks}

Given edges $e,f\in E$ and an integer $\ell\ge 1$, let $\cl{W}_\ell(e,f)$ denote the set of all walks beginning at $e$ and ending at $f$ of length $\ell$.
One way of interpreting the definition of $X$ from \cref{eq:X-def} is that the $(e,f)$ entry of $X$ enumerates all walks of length two from $e$ to $f$ in $G$.
These are precisely the walks in $\cl{W}_2(e,f)$.
The next result observes that higher powers of $X$ enumerate walks of longer lengths in $G$.

\begin{proposition}
	\label{prop:power-walk}
	For any edges $e,f\in E$ and integer $\ell \ge 0$, we have 
	\[X^\ell[e,f] = \sum_{W\in \cl{W}_{\ell+1}(e,f)} \xi(W).\]
\end{proposition}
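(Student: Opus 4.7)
The plan is to prove the claim by induction on $\ell$, essentially reducing it to the standard fact that powers of an adjacency matrix count walks, adapted to track the monomial weight $\xi$ instead of mere cardinality.

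For the base case $\ell = 0$, I would verify that $X^0 = I$ agrees with the right-hand side. The only walk of length $1$ from $e$ to $f$ exists precisely when $e = f$, in which case the single walk $\langle e\rangle$ has weight $1$ by the stated convention; when $e\neq f$ the set $\mathcal{W}_1(e,f)$ is empty and the sum is $0$. Both match the entries of the identity matrix.

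For the inductive step, assuming the claim holds for $\ell - 1$, I would expand
\[X^\ell[e,f] = \sum_{g\in E} X^{\ell-1}[e,g]\cdot X[g,f]\]
and apply the inductive hypothesis to rewrite $X^{\ell-1}[e,g]$ as $\sum_{W\in\mathcal{W}_\ell(e,g)} \xi(W)$. By the definition of $X$ in \cref{eq:X-def}, the factor $X[g,f]$ vanishes unless $\head(g)=\tail(f)$, in which case it contributes $x_{gf}$. The key combinatorial step is then to observe that every walk $W' = \langle e_1,\dots,e_{\ell+1}\rangle \in \mathcal{W}_{\ell+1}(e,f)$ decomposes uniquely as the concatenation of a walk $W = \langle e_1,\dots,e_\ell\rangle \in \mathcal{W}_\ell(e,g)$ with the single edge $f = e_{\ell+1}$, where $g = e_\ell$ must satisfy $\head(g)=\tail(f)$. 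Moreover the weight factors as $\xi(W') = \xi(W)\cdot x_{gf}$, directly matching the contribution obtained from the matrix product.

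Combining these observations yields a weight-preserving bijection between the terms of $X^\ell[e,f]$ and the walks in $\mathcal{W}_{\ell+1}(e,f)$, completing the induction. There is no real obstacle here; the only point requiring a small amount of care is aligning the length-indexing convention (walks of length $\ell+1$ correspond to $\ell$ matrix multiplications) and handling the base case, where the single-edge walk convention $\xi(\langle e\rangle)=1$ must be consistent with $X^0 = I$.
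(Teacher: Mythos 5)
Your proof is correct, and it establishes the same standard fact by the same essential mechanism (walks correspond to sequences of consecutive edges with multiplicative weights); the only difference is organizational, as the paper expands the full product $X^\ell[e,f]=\sum_{e_0,\dots,e_\ell}\prod_j X[e_j,e_{j+1}]$ in one step rather than inducting on $\ell$ and peeling off the last edge. Both handle the length-indexing and the single-edge convention $\xi(\langle e\rangle)=1$ correctly, so there is nothing to fix.
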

\begin{proof}
	By expanding out the definition of matrix multiplication, we see that 
	\[X^\ell[e,f] = \sum_{\substack{e_0, \dots, e_{\ell}\in E \\ e_0 = e \\ e_{\ell} = f}} \prod_{j=0}^{\ell-1} X[e_j,e_{j+1}].\]
	By definition, $X[e_j,e_{j+1}] = x_{e_je_{j+1}}$ if we can step from $e_j$ to $e_{j+1}$ in $G$, and is zero otherwise.
	Thus, the product 
	\[\prod_{j=0}^{\ell-1} X[e_j,e_{j+1}]\]
	is nonzero if and only if $W = \langle e_0, \dots, e_{\ell}\rangle$ is a walk of length $(\ell+1)$ in $G$.
	In this case, 
	\[\prod_{j=0}^{\ell-1} X[e_j,e_{j+1}] = \prod_{j=0}^{\ell-1} x_{e_je_{j+1}} = \xi(W)\]
	so 
	we have
	\[X^\ell[e,f] = \sum_{W\in\cl{W}_{\ell+1}(e,f)} \xi(W)\]
	as claimed.
\end{proof}

\begin{corollary}[Enumerating Walks]
	\label{cor:inverse-walks}
	For any edges $e,f\in E$, we have
	\[(I-X)^{-1}[e,f] = \sum_{\ell=0}^\infty \grp{\sum_{W\in\cl{W}_{\ell+1}(e,f)}\xi(W)}.\]
\end{corollary}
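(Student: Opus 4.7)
The plan is a direct two-line composition of the two immediately preceding results: \Cref{prop:geo-series} converts $(I-X)^{-1}$ into an infinite sum of powers of $X$, and \Cref{prop:power-walk} interprets each such power entrywise as a generating function for walks of a fixed length.

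First I would verify that the hypothesis of \Cref{prop:geo-series} is satisfied by $X$. By the definition in \cref{eq:X-def}, each entry $X[e,f]$ is either $0$ or the single indeterminate $x_{ef}$, and in both cases has constant term zero. Thus \Cref{prop:geo-series} applies and yields, as an identity of matrices of formal power series,
\[
(I-X)^{-1} = \sum_{\ell=0}^\infty X^\ell.
\]

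Next I would read off the $(e,f)$ entry on both sides. On the right, entrywise extraction commutes with the infinite sum (this is already guaranteed by the well-definedness argument inside the proof of \Cref{prop:geo-series}, since nonzero entries of $X^\ell$ have degree at least $\ell$, so only finitely many terms contribute to the coefficient of any fixed monomial). This gives $\sum_{\ell=0}^\infty X^\ell[e,f]$, and then \Cref{prop:power-walk} rewrites each $X^\ell[e,f]$ as $\sum_{W\in\cl{W}_{\ell+1}(e,f)} \xi(W)$, producing the claimed identity.

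I do not expect any obstacle: the corollary is essentially an assembly of the two previous propositions, and the only conceptual point worth mentioning is that the infinite sum of generating functions on the right-hand side is itself a legitimate formal power series. Since that point has already been argued in the course of proving \Cref{prop:geo-series}, I would simply reference it rather than repeat the reasoning.
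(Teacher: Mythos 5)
Your proposal is correct and matches the paper's proof, which likewise obtains the identity by combining the geometric series formula of \Cref{prop:geo-series} with the walk-enumeration property of powers of $X$ from \Cref{prop:power-walk}; your added checks (that entries of $X$ have zero constant term, and that entrywise extraction of the infinite sum is well defined) are exactly the points implicit in the paper's one-line argument.
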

\begin{proof}
	This result follows by combining the geometric series formula from \Cref{prop:geo-series} with the enumerative property of powers of $X$ from \Cref{prop:power-walk}.
\end{proof}

\subsubsection*{Enumerating Edge-Disjoint Walks}

Given subsets of edges $S,T\sub E$ of equal size $|S|=|T|=r\ge 1$ and an integer $\ell\ge 1$, we define $\cl{F}_\ell(S,T)$ to be the family of collections of $r$ walks of total length $\ell$, beginning at different edges of $S$ and ending at different edges of $T$.
If we fix some ordering $e_1, \dots, e_r$ of the edges in $S$, then we can view each element of $\cl{F}_{\ell}(S,T)$ as a sequence of walks $\langle W_1, \dots, W_r\rangle$ satisfying the properties that 
each $W_i$ begins at $e_i$ and ends at some edge of $T$, the $W_i$ walks all end at distinct edges of $T$, and the sum of the lengths of the $W_i$ walks is $\ell$.

Furthermore, let $\cl{D}_\ell(S,T)\sub \cl{F}_\ell(S,T)$
be the family of collections of $r$ \emph{edge-disjoint} walks from $S$ to $T$ of total length $\ell$.

\Cref{cor:inverse-walks} shows that entries of $\Gamma = (I-X)^{-1}$ enumerate walks in $G$.
The following result uses this fact to show that determinants of submatrices of $\Gamma$ enumerate collections of walks in $G$, beginning and ending at different edges.
This is a simple observation, following immediately from the definition of the determinant.
The proof appears somewhat long only because we spell out the details of each step in the calculation.

\begin{lemma}[Arbitrary Walks]
	\label{lem:det-definition}
	For any equal-size subsets of edges $S,T\sub E$, we have 
	\[\det~(I-X)^{-1}[S,T] = \sum_{\ell=1}^{\infty} \grp{\sum_{\cl{C}\in \cl{F}_\ell(S,T)}\xi(\cl{C})}\]
\end{lemma}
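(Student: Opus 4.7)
The plan is to expand the determinant on the left using the Leibniz formula, substitute the enumeration from \Cref{cor:inverse-walks} into each entry, and recognize the resulting sum as an enumeration of collections in $\cl{F}_\ell(S,T)$.

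First, fix orderings $e_1,\dots,e_r$ of $S$ and $f_1,\dots,f_r$ of $T$. Writing $\Gamma=(I-X)^{-1}$, the Leibniz formula gives
\[\det~\Gamma[S,T] = \sum_{\sigma\in S_r} \sgn(\sigma)\prod_{i=1}^r \Gamma[e_i, f_{\sigma(i)}].\]
Since we work over $\mathbb{F}=\mathbb{F}_{2^q}$ of characteristic two, every $\sgn(\sigma)$ equals $1$, so the signs disappear. This is the key simplification that makes the combinatorial bijection work cleanly below.

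Next, I would substitute \Cref{cor:inverse-walks} into each factor, so that $\Gamma[e_i, f_{\sigma(i)}]$ becomes a sum of $\xi(W)$ over walks $W$ from $e_i$ to $f_{\sigma(i)}$. Expanding the product over $i$ converts $\prod_i \Gamma[e_i, f_{\sigma(i)}]$ into a sum over tuples $(W_1,\dots,W_r)$, where $W_i$ is a walk from $e_i$ to $f_{\sigma(i)}$, each contributing $\xi(W_1)\cdots \xi(W_r)=\xi(\cl{C})$ where $\cl{C}=\langle W_1,\dots,W_r\rangle$. Then summing over $\sigma\in S_r$ produces a sum over all sequences $\cl{C}=\langle W_1,\dots,W_r\rangle$ in which $W_i$ starts at $e_i$ and the $W_i$ end at pairwise distinct edges of $T$ (with the permutation $\sigma$ encoded as the endpoint map $i\mapsto$ index of $\head(W_i)$ in the fixed ordering of $T$). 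This gives a bijection between pairs $(\sigma,(W_1,\dots,W_r))$ contributing to the sum and elements of $\bigcup_{\ell\ge 1}\cl{F}_\ell(S,T)$: the ``distinct endpoints'' condition in the definition of $\cl{F}_\ell(S,T)$ matches exactly the fact that $\sigma$ is a permutation. Grouping the sum by the total length $\ell=\sum_i |W_i|$ then yields the right-hand side of the claimed identity.

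The only subtlety is justifying the manipulation of infinite sums. Since each $\Gamma[e_i,f_{\sigma(i)}]$ is a formal power series (not a polynomial), the product expansion and reindexing must take place in the ring of formal power series. This is legitimate because every monomial in the final expression has bounded total degree, which in turn bounds the total length $\ell$ of any walk-collection contributing to that monomial's coefficient — so only finitely many terms in $\bigcup_\ell \cl{F}_\ell(S,T)$ contribute to any given monomial, and the rearrangement is well-defined. I expect this bookkeeping to be the only non-obvious step; the combinatorial content is essentially an unsigned Lindström–Gessel–Viennot-style expansion, streamlined by working in characteristic two.
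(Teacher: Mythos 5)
Your proposal is correct and follows essentially the same route as the paper's proof: expand the determinant without signs (characteristic two), substitute the walk enumeration from \Cref{cor:inverse-walks} into each entry, distribute the product, and identify pairs of a permutation (equivalently, a bijection $S\to T$) together with a tuple of walks with collections in $\cl{F}_\ell(S,T)$ grouped by total length. Your extra remark justifying the rearrangement of the infinite sums (each monomial's degree bounds the total walk length, so only finitely many collections contribute to any coefficient) is a welcome bit of care that the paper handles only implicitly.
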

\begin{proof}
	For convenience, write $\Gamma = (I-X)^{-1}$.
	
	Let $\mathfrak{S}(S,T)$ be the set of all bijections from $S$ to $T$.
	
	By the definition of the determinant, we have 
	\begin{equation}
		\label{eq:Laplace}
		\det \Gamma[S,T] = \sum_{\pi\in\mathfrak{S}(S,T)} \prod_{e\in S} \Gamma[e,\pi(e)].
	\end{equation}
	Note that we do not include a factor  for the sign of $\pi$ in the above equation, because we are working over a field of characteristic two.
	
	By \Cref{cor:inverse-walks}, for each $e\in S$ we have 
	\begin{equation}
		\label{eq:single-det}
		\Gamma[e,\pi(e)] = \sum_{\ell=0}^\infty \grp{\sum_{W\in\cl{W}_{\ell+1}(e,\pi(e))}\xi(W)}.
	\end{equation}
	
	Write $S = \set{e_1, \dots, e_r}$, where $r= |S|=|T|$.
	
	By multiplying the above equation over all choices of $e\in S$, we have 
	\begin{equation}
		\label{eq:one-product}
		\prod_{e\in S}\Gamma[e,\pi(e)] = \prod_{i=1}^r \grp{\sum_{\ell=0}^\infty \grp{\sum_{W\in\cl{W}_{\ell+1}(e_i,\pi(e_i))}\xi(W)}}.
	\end{equation}
	Now, let $\cl{L}$ be the set of all $r$-tuples $(\ell_1,\dots,\ell_r)$ of positive integers summing to
	\[\ell_1 + \dots +\ell_r = \ell.\]
	If we expand out the product on the right-hand side of \cref{eq:one-product} and group terms according to the total length of the walks they come from, we obtain
	\[\prod_{i=1}^r \grp{\sum_{\ell=0}^\infty \grp{\sum_{W\in\cl{W}_{\ell+1}(e_i,\pi(e_i))}\xi(W)}} = \sum_{\ell=1}^{\infty} \grp{
		\sum_{\substack{(\ell_1,\dots,\ell_r)\in\cl{L} \\ W_i\in \cl{W}_{\ell_i}(e_i,\pi(e_i))}} 
		\prod_{i=1}^r\xi(W_i)}.\]
	To clarify the expression above: in the right-hand side of the above equation, the first inner summation is over all choices of positive integers $\ell_1,\dots, \ell_r$ which sum to $\ell$, and choices of walks $W_1,\dots, W_r$ where $W_i$ is a walk of length $\ell_i$ from $e_i$ to $\pi(e_i)$.
	This is simply the result of distributing the product over $i\in[r]$ on the left-hand side of the equation over the sum of  walks of all possible lengths from $e_i$ to $\pi(e_i)$.
	
	By chaining the above equation together with \cref{eq:Laplace,eq:single-det,eq:one-product}, and interchanging summation, we get that 
	
	\begin{equation}
		\label{eq:nearly-there}
		\det~\Gamma[S,T] = \sum_{\ell=1}^\infty \grp{\sum_{\pi\in\mathfrak{S}(S,T)} \sum_{\substack{\\(\ell_1,\dots,\ell_r)\in\cl{L} \\ W_i\in \cl{W}_{\ell_i}(e_i,\pi(e_i))}} 
			\prod_{i=1}^r\xi(W_i)}.
	\end{equation}
	
	To simplify \cref{eq:nearly-there}, observe that 
	for any choice of bijection $\pi\in\mathfrak{S}(S,T)$, lengths $(\ell_1,\dots,\ell_r)\in\cl{L}$, and  walks $W_i\in\cl{W}_{\ell_i}(e_i,\pi(e_i))$, the collection $\langle W_1, \dots, W_r\rangle$ is a sequence of walks from $S$ to $T$ of total length $\ell$.
	Conversely, any collection $\cl{C}\in\cl{F}_\ell(S,T)$  has walks whose lengths sum up to $\ell$, and corresponds to a unique bijection $\pi\in\mathfrak{S}(S,T)$, obtained by checking which starting edges in $S$ are connected to which ending edges in $T$ by walks in $\cl{C}$.
	
	Thus, the inner nested summation above is equivalent to a single sum over all collections of walks in $\cl{F}_\ell(S,T)$.
	Since the weight of a collection $\cl{C} = \langle W_1, \dots, W_r\rangle$
	is precisely
	\[\xi(\cl{C}) = \prod_{i=1}^r \xi(W_i),\]
	the discussion from the previous paragraph together with \Cref{eq:nearly-there} implies that 
	
	\[\det\Gamma[S,T] = \sum_{\ell=1}^{\infty} \grp{\sum_{\cl{C}\in \cl{F}_\ell(S,T)}\xi(\cl{C})}\]
	which proves the desired result.
\end{proof}

We now observe that the determinant sieves out collections of intersecting walks, so that only edge-disjoint families of walks are included in its enumeration.

\begin{lemma}[Intersecting Walks Cancel]
	\label{lem:lgv}
	For any equal-size subsets of edges $S,T\sub E$ and integer $\ell\ge 1$, we have 
	\[\sum_{\cl{C}\in \cl{F}_\ell(S,T)} \xi(\cl{C})= \sum_{\cl{C}\in \cl{D}_\ell(S,T)} \xi(\cl{C}).\]
\end{lemma}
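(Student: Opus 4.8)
The plan is to prove this via a sign-reversing involution (the standard Lindström–Gessel–Viennot argument), adapted to the characteristic-two setting. We want to show that the total contribution of all collections $\cl{C}\in\cl{F}_\ell(S,T)\setminus\cl{D}_\ell(S,T)$ — i.e., collections containing a pair of walks that share an edge — is zero as a formal power series. Since we work over $\mathbb{F}_{2^q}$, signs do not appear, so instead of showing that ``bad'' collections cancel in pairs with opposite signs, we need to exhibit an involution $\Phi$ on the set of bad collections that is \emph{fixed-point-free} and \emph{weight-preserving}: then bad collections pair up $\{\cl{C},\Phi(\cl{C})\}$, each pair contributes $\xi(\cl{C})+\xi(\Phi(\cl{C})) = 2\xi(\cl{C}) = 0$, and the bad terms vanish, leaving exactly the sum over $\cl{D}_\ell(S,T)$.

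The key steps, in order, are as follows. First I would fix a total order on the edges of $G$, so that any ``intersection event'' can be identified canonically. Given a bad collection $\cl{C} = \langle W_1,\dots,W_r\rangle$, consider all triples (walk index, position, walk index, position) where two walks traverse a common edge; select the lexicographically-first such intersecting edge $g$ along some canonical scan, say the first edge $g$ (in the fixed edge order, or in the order it is first encountered) that is used by two walks, and among the walks using it pick the two with smallest indices $i<j$, and the first positions at which each uses $g$. Second, I would define $\Phi(\cl{C})$ by the classical ``tail swap'': cut $W_i$ and $W_j$ at their first occurrence of $g$ and exchange the tails (the portions after $g$). This produces two new walks $W_i', W_j'$; since $g$ ends at a common vertex $\head(g)$, the concatenations are valid walks, and the multiset of edges used by $W_i'\cup W_j'$ equals that used by $W_i\cup W_j$, so the total length $\ell$ is preserved and the weight is preserved: $\xi(W_i')\xi(W_j') = \xi(W_i)\xi(W_j)$ because the set of consecutive-edge pairs is the same except at the splice point, where both $\langle \dots, g, \cdot\rangle$ transitions agree up to which tail follows — here one must check carefully that the pair $x_{g\,h}$ recorded at the cut is matched correctly after swapping, which works because both walks pass through $g$ and the variable depends only on the ordered pair of consecutive edges. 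Third, I would verify that $\Phi$ is an involution: applying the same canonical selection rule to $\Phi(\cl{C})$ identifies the same edge $g$ and the same indices $i<j$ (this is the delicate bookkeeping point — the selection must be invariant under the swap), so $\Phi(\Phi(\cl{C})) = \cl{C}$. Fourth, I would check $\Phi$ has no fixed points: the swap genuinely changes the collection because $W_i$ and $W_j$ are distinct walks starting at distinct edges of $S$ (so their tails after $g$ cannot both be identical in a way that makes the swap trivial — or if the tails happen to coincide, one picks instead the \emph{last} common position, or uses a refined rule to guarantee a strict change; this edge case needs care). Finally, since $\cl{C}\mapsto\Phi(\cl{C})$ is a weight-preserving fixed-point-free involution on $\cl{F}_\ell(S,T)\setminus\cl{D}_\ell(S,T)$, the characteristic-two cancellation gives $\sum_{\cl{C}\in\cl{F}_\ell(S,T)}\xi(\cl{C}) = \sum_{\cl{C}\in\cl{D}_\ell(S,T)}\xi(\cl{C})$.

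The main obstacle I anticipate is making the involution genuinely \emph{well-defined and fixed-point-free} — in particular, pinning down a canonical choice of ``where to cut'' that is demonstrably invariant under the swap, and handling the subtle case where the two walks share an edge at a position such that a naive tail-swap returns the same collection. The footnote in the paper's acknowledgments (thanking Ce Jin for pointing out a mistake in an earlier proof of exactly this lemma) strongly suggests this is the crux: the fix likely involves a more careful selection rule — for instance, scanning one distinguished walk $W_i$ (say with the smallest index that intersects another walk) edge-by-edge from its \emph{start}, stopping at the first edge $g$ it shares with any later walk $W_j$ (smallest such $j$), and swapping tails \emph{there}. One then argues that in $\Phi(\cl{C})$ the walk $W_i'$ reaches that same edge $g$ for the first time at the same step (its prefix up to $g$ is unchanged), identifies the same $g$ and same $j$, and the swap is strict because the tails of $W_i$ and $W_j$ after $g$ must differ (they end at distinct edges of $T$). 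Getting this argument airtight — especially the claim that the post-$g$ tails are distinct, which uses that walks in a collection end at \emph{distinct} edges of $T$ — is where the real work lies; everything else (weight preservation, the characteristic-two pairing) is routine.
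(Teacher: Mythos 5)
Your proposal is correct and follows essentially the same route as the paper's proof: a weight-preserving, fixed-point-free suffix-swap involution on collections with an edge intersection, with cancellation in characteristic two. The refined selection rule you settle on at the end -- take the smallest index $i$ whose walk meets another, the first shared edge $e$ on $W_i$, the smallest $j>i$ sharing $e$, swap the suffixes starting at $e$, and use the distinct ending edges in $T$ to rule out fixed points and the unchanged prefixes to show the same $(i,e,j)$ is selected after the swap -- is exactly the argument in the paper.
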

\begin{proof}
	Fix $S,T\sub E$ and integer $\ell\ge 1$.
	Let $r = |S| = |T|$.
	
	For convenience, abbreviate $\cl{F} = \cl{F}_\ell(S,T)$ and $\cl{D} = \cl{D}_\ell(S,T)$.
	Let $\cl{S} = \cl{F}\setminus\cl{D}$ be the family of all collections of $r$ walks beginning at different edges of $S$ and ending at different edges of $T$, such that at least two walks in the collection intersect at an edge. 
	By definition we have 
	\[\sum_{\cl{C}\in \cl{F}} \xi(\cl{C}) = \sum_{\cl{C}\in \cl{D}} \xi(\cl{C}) + \sum_{\cl{C}\in \cl{S}} \xi(\cl{C}).\]
	So to prove the claim, it suffices to show that 
	\[\sum_{\cl{C}\in \cl{S}} \xi(\cl{C})\]
	is the zero polynomial.
	We prove this by pairing up collections $\cl{C}$ in $\cl{S}$ of equal weight $\xi(\cl{C})$, and observing that contributions from such collections vanish modulo two.

	\begin{figure}
		\centering
		\begin{tikzpicture}[scale=1.5,
vtx/.style={
inner sep=0pt},
topedge/.style={-stealth, thick, darkorange},
botedge/.style={-stealth, thick, darkmidnightblue, densely dashed}]
    \def\top{0.7cm};
    \def\half{0cm};
    \def\bottom{-0.7cm};

    \def\stp{1cm};

    \def\extr{0.7cm};
    \def\extrw{1.4cm};

    \node[vtx] (1) at (\stp, \top) {};
    \node[vtx] (2) at (2*\stp, \top) {};
    \node[vtx] (3) at (3*\stp, \half) {};
    \node[vtx] (4) at (4*\stp, \half) {};
    \node[vtx] (5) at (5*\stp, \top) {};
        \node[vtx] (6) at (6*\stp, \top) {};

    \draw[topedge] (1) -- (2) node [midway, above] {$e_1$};
    \draw[topedge] (2) -- (3) node [near end, above=5] {$e_2$};
        \draw[topedge] (3) -- (4) node[midway, above] {$\textcolor{byzantium}{g}$};
    \draw[topedge] (4) -- (5) node [near start, above=5] {$e_3$};
        \draw[topedge] (5) -- (6) node [midway, above] {$e_4$};

    \node[vtx] (7) at (\stp, \bottom) {};
    \node[vtx] (8) at (2*\stp, \bottom) {};
        \node[vtx] (9) at (5*\stp, \bottom) {};
    \node[vtx] (10) at (6*\stp, \bottom) {};

    \draw[botedge] (7) -- (8) node [midway, above] {$f_1$};
    \draw[botedge] (8) -- (3) node [near end, below=2.5] {$f_2$};
            \draw[botedge] (3) -- (4);
    \draw[botedge] (4) -- (9) node [near start, below=2.5] {$f_3$};
        \draw[botedge] (9) -- (10) node [midway, above] {$f_4$};

    \node (100) at (6*\stp + \extr, \half) {};
        \node (101) at (6*\stp + \extr + \extrw, \half) {};

    \draw[latex-latex, ultra thick, alizarin] (100) -- (101);

    \def\halfshift{6*\stp + 2*\extr + \extrw};

        \node[vtx] (1s) at (\halfshift, \top) {};
        \node[vtx] (2s) at (\halfshift + \stp, \top) {};
    \node[vtx] (3s) at (\halfshift + 2*\stp, \half) {};
    \node[vtx] (4s) at (\halfshift + 3*\stp, \half) {};
    \node[vtx] (5s) at (\halfshift + 4*\stp, \top) {};
    \node[vtx] (6s) at (\halfshift + 5*\stp, \top) {};

    \node[vtx] (7s) at (\halfshift, \bottom) {};
    \node[vtx] (8s) at (\halfshift + \stp, \bottom) {};
        \node[vtx] (9s) at (\halfshift + 4*\stp, \bottom) {};
    \node[vtx] (10s) at (\halfshift + 5*\stp, \bottom) {};

    \draw[topedge] (1s) -- (2s) node [midway, above] {$e_1$};
    \draw[topedge] (2s) -- (3s) node[near end, above=5] {$e_2$};
    \draw[topedge] (3s) -- (4s) node[midway,above] {$\textcolor{byzantium}{g}$};
    \draw[topedge] (4s) -- (9s) node[near start, below=2.5] {$f_3$};
    \draw[topedge] (9s) -- (10s) node[midway, above] {$f_4$};

    \draw[botedge] (7s) -- (8s) node[midway, above] {$f_1$};
   \draw[botedge] (8s) -- (3s) node[near end, below=2.5] {$f_2$};
   \draw[botedge] (3s) -- (4s);
   \draw[botedge] (4s) -- (5s) node[near start, above=5] {$e_3$};
   \draw[botedge] (5s) -- (6s) node[midway,above] {$e_4$};
\end{tikzpicture}
		\caption{Given walks $W_i = \langle e_1, e_2, g, e_3, e_4\rangle$ and $W_j = \langle f_1, f_2, g, f_3, f_4\rangle$ overlapping at $g$, we can swap their suffixes to produce walks $W'_i = \langle e_1, e_2, g, f_3, f_4\rangle$ and $W'_j = \langle f_1, f_2, g, e_3, e_4\rangle$ which still overlap at $g$.
			The weight  $\xi(W_i, W_j) = (x_{e_1e_2}x_{e_2g}x_{ge_3}x_{e_3e_4})\cdot (x_{f_1f_2}x_{f_2g}x_{gf_3}x_{f_3f_4})$ of the first pair $(W_i, W_j)$ is precisely equal to the weight $\xi(W'_i,W'_j) = (x_{e_1e_2}x_{e_2g}x_{gf_3}x_{f_3f_4})\cdot (x_{f_1f_2}x_{f_2g}x_{ge_3}x_{e_3e_4})$ of the second pair $(W'_i, W'_j)$, because each pair traverses the same multiset of consecutive pairs of edges.}
		\label{fig:swap}
	\end{figure}
	
	Fix an ordering $e_1, \dots, e_r$ of the edges in $S$.
	Take any $\cl{C} = \langle W_1, \dots, W_r\rangle\in\cl{S}$, with the walks ordered so that $W_i$ begins at edge $e_i$.
	By assumption, at least two walks in $\cl{C}$ overlap at an edge.
	Let $i\in[r]$ be the smallest index such that $W_i$ intersects some other walk in $\cl{C}$ at an edge.
	Let $e$ be the first edge in $W_i$ which is contained in
	another walk of $\cl{C}$.
	Let $j\in[r]$ be the smallest index $j>i$ such that $W_j$ overlaps with $W_i$ at edge $e$. 
	
	We can split the walk $W_i$ uniquely 
	\[W_i = A_i\dia B_i\]
	as the concatenation of a prefix walk $A_i$ not including edge $e$, and a suffix walk $B_i$ beginning with edge $e$.
	We can similarly split $W_j$ uniquely 
	\[W_j = A_j\dia B_j\]
	as the concatenation of a prefix $A_j$ not including $e$, and a suffix $B_j$ beginning with $e$.
	
	Now, define walks
	\[W'_i = A_i\dia B_j\quad\text{and}\quad W'_j = A_j\dia B_i\]
	by swapping the suffixes of $W_i$ and $W_j$.
	An example of this operation is depicted in \Cref{fig:swap}.
	For all $l\in [r]$ with $l\not\in\set{i,j}$,
	set $W'_l = W_l$.
	Define a new collection  of walks 
	\[\cl{C}'=\langle W_1', \dots, W'_r\rangle\] 
	by replacing $W_i$ and $W_j$ in $\cl{C}$ with $W'_i$ and $W'_j$ respectively.
	
	Since $W_i$ and $W_j$ end at different edges of $T$, we know that $W'_i\neq W_i$ and $W'_j\neq W_j$.
	This shows that $\cl{C}'\neq \cl{C}$.
	Since the walks in $\cl{C}'$ still begin at different edges of $S$ and end at different edges of $T$, $\cl{C}'\in\cl{F}$.
	Moreover, since $W'_i,W'_j$ overlap at an edge, we have $\cl{C}'\not\in\cl{D}$.
	
	Thus $\cl{C}'\in\cl{S}$.

	Additionally, we claim that if we apply the above suffix swapping procedure (which we used to go from $\cl{C}$ to $\cl{C}'$) to the collection $\cl{C}'$, we recover $\cl{C}$.
	
	Indeed, for all $l\in [r]$ with $l < i$,
	the walk $W'_l = W_l$ does not intersect any other walk in $\cl{C}$
	at an edge, by the definition of $i$.
	Since the multiset of edges traversed by walks in $\cl{C}\setminus \set{W_l}$ and $\cl{C}\setminus \set{W'_l}$
	are the same,
	this means that $W'_l$ does not intersect any other walk in $\cl{C}'$ at an edge either.
	So
	$i$ is  the smallest index in $[r]$
	such that $W'_i$ intersects some other walk in $\cl{C}'$
	at an edge.
	Since the prefixes of $W'_i$ and $W_i$
	before edge $e$ are the same,
	we see that $e$ is also the first edge in $W'_i$
	which is contained in another walk of $\cl{C}'$.
	Then because $W_j'$ traverses edge $e$,
	and $W_l' = W_l$ for all $l\not\in \set{i,j}$,
	we get that $j > i$ is the smallest index such that $W_j'$
	overlaps with $W'_i$ at edge $e$. 
	Then, when we swap the suffixes of $W'_i$ and $W'_j$
	after the first appearance of $e$ on these walks,
	we recover $W_i$ and $W_j$ respectively,
	and so applying the suffix swapping procedure to
	$\cl{C}'$ produces the original collection $\cl{C}$ as claimed.

	So, the suffix swapping routine described above partitions $\cl{S}$ into distinct pairs.
	
	Suppose $\cl{C}$ and $\cl{C}'$ are paired up by the suffix swapping argument.
	Then $\cl{C}$ and $\cl{C}'$ traverse the same multiset of consecutive pairs of edges.
	Thus these collections
	\[\xi(\cl{C}) = \xi(\cl{C}')\]
	have the same weight.
	Since we work over a field of characteristic two, the above equation implies that each pair $(\cl{C},\cl{C}')$ of collections mapped to each other by suffix swapping satisfies
	\[\xi(\cl{C}) + \xi(\cl{C}') = 0.\]
	Since $\cl{S}$ is partitioned into such pairs, we have 
	\[\sum_{\cl{C}\in \cl{S}} \xi(\cl{C}) = 0.\]
	Together with the discussion from the beginning of the proof, this proves the claim.    
\end{proof}

\begin{remark}[Characteristic Two is Unnecessary]
	In the proof of \Cref{lem:lgv}, our argument used the fact that we work over a field of characteristic two.
	This restriction on the characteristic is only included for the sake of simplicity, and is not necessary to enumerate families of edge-disjoint walks.
If we instead worked over a field of odd characteristic, then all that changes is that terms in the expansion of the determinant from \Cref{lem:det-definition} come with a sign, and we can now pair up and cancel terms with opposite signs to prove a signed variant of \Cref{lem:lgv}.
\end{remark}

\begin{corollary}[Edge-Disjoint Walks]
	\label{cor:disjoint-summation}
	For any equal-size subsets of edges $S,T\sub E$, we have 
	\[\det~(I-X)^{-1}[S,T] = \sum_{\ell=1}^{\infty}\grp{\sum_{\cl{C}\in\cl{D}_\ell(S,T)}\xi(\cl{C})}.\]
\end{corollary}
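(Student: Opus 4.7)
The plan is to derive this corollary as a direct consequence of the two preceding lemmas in \Cref{sec:enumeration}. \Cref{lem:det-definition} already identifies the determinant $\det~(I-X)^{-1}[S,T]$ with the generating function
\[\sum_{\ell=1}^{\infty} \grp{\sum_{\cl{C}\in \cl{F}_\ell(S,T)}\xi(\cl{C})}\]
enumerating \emph{arbitrary} collections of walks from $S$ to $T$, grouped by total length. \Cref{lem:lgv} then asserts that for each fixed length $\ell \ge 1$, the contributions from collections with overlapping walks cancel in characteristic two, so that
\[\sum_{\cl{C}\in \cl{F}_\ell(S,T)} \xi(\cl{C})= \sum_{\cl{C}\in \cl{D}_\ell(S,T)} \xi(\cl{C}).\]

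The first step is simply to invoke \Cref{lem:det-definition} to write the determinant as the outer sum over $\ell$ of inner sums over $\cl{F}_\ell(S,T)$. The second step is to substitute \Cref{lem:lgv} into each inner sum, replacing $\cl{F}_\ell(S,T)$ by the subfamily $\cl{D}_\ell(S,T)$ of edge-disjoint collections. Since these two operations act independently on each fixed length $\ell$, no further manipulation of the infinite sum is required: term-by-term replacement is valid because both sides of the length-$\ell$ equation are finite sums of monomials of bounded degree.

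I do not expect any real obstacle here, as the corollary is essentially a bookkeeping step that records the combined statement of \Cref{lem:det-definition,lem:lgv} in the form that will be used in \Cref{sec:APC,sec:k-APC}. The only mild subtlety worth noting in the write-up is that the replacement happens inside a formal power series, but this is harmless because for each fixed multi-degree $\b{d}\in\mathbb{N}^J$ only finitely many $\ell$ contribute to the coefficient of $\prod_{j\in J} x_j^{d_j}$, so the two series are equal coefficient-by-coefficient. The proof therefore amounts to a single substitution and can be stated in one or two lines.
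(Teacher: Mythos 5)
Your proposal is correct and matches the paper's proof, which likewise obtains the corollary by combining \Cref{lem:det-definition} with \Cref{lem:lgv} via term-by-term substitution at each fixed total length $\ell$. The extra remark about coefficient-wise equality of the formal power series is a fine (if unneeded in the paper) justification of the same one-line argument.
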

\begin{proof}
	This follows by combining \Cref{lem:det-definition,lem:lgv}.
\end{proof}

\begin{lemma}[Edge-Disjoint Walks $\Rightarrow$ Edge-Disjoint Paths]
	\label{lem:walk-to-path}
	Let $S,T\sub E$ be subsets of edges of size $|S|=|T|=r$.
	If the graph $G$ contains $r$ edge-disjoint walks from $S$ to $T$, then $G$ also contains $r$ edge-disjoint paths from $S$ to $T$.
\end{lemma}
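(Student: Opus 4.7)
The plan is to take a collection $\cl{C} = (W_1, \ldots, W_r)$ of $r$ edge-disjoint walks from $S$ to $T$ that minimizes the total length (which exists by hypothesis, since lengths are nonnegative integers), and to show that every walk in $\cl{C}$ is already a path. This immediately gives the desired $r$ edge-disjoint paths from $S$ to $T$. The proof is by contradiction via cycle cancellation: I would argue that if any walk fails to be a path, we can shorten it while preserving the structure of the collection, contradicting minimality.

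For the main case, suppose some $W_i = \langle e_1, \ldots, e_\ell \rangle$ in $\cl{C}$ has two edges $e_a$ and $e_b$ with $a < b$ entering the same vertex $v = \head(e_a) = \head(e_b)$. Splicing out the subwalk between the two visits to $v$ produces $W'_i = \langle e_1, \ldots, e_a, e_{b+1}, \ldots, e_\ell \rangle$. The splice is legal because $\tail(e_{b+1}) = \head(e_b) = v = \head(e_a)$, and $W'_i$ has the same first edge $e_1 \in S$ and last edge $e_\ell \in T$ as $W_i$. Since $W'_i$ uses a proper subset of the edges of $W_i$, replacing $W_i$ by $W'_i$ in $\cl{C}$ preserves edge-disjointness and the $S$-to-$T$ endpoint structure of the collection, while strictly reducing the total length. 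This contradicts the choice of $\cl{C}$.

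The main obstacle will be the boundary case where the repeated entry vertex occurs at the very end of the walk ($b = \ell$), or the closely related case where the walk's starting and ending vertices coincide. In these situations a naive splice changes the walk's last edge from $e_\ell$ to $e_a$, so preserving the collection's structure requires a more careful rerouting argument, possibly reshuffling which edge of $S$ is paired with which edge of $T$ across the collection. In the natural application of the lemma, where $S = \eout(s)$ and $T = \ein(t)$ for a distinct source $s$ and sink $t$, every walk from $S$ to $T$ has start vertex $s$ and end vertex $t$, so these boundary degeneracies cannot arise and the minimum-length argument immediately yields the $r$ edge-disjoint paths.
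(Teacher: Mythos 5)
Your route is genuinely different from the paper's. The paper needs no minimality or exchange argument at all: for each walk $W_i$ it restricts to the subgraph $G_i$ consisting only of the edges traversed by $W_i$, takes a shortest path from the first edge $e_i$ to the last edge $f_i$ inside $G_i$, and gets edge-disjointness of the resulting paths for free because the subgraphs $G_i$ have pairwise disjoint edge sets. Your global argument (take a collection of minimum total length and splice out the segment between two edges entering the same vertex) is a valid alternative for the main case $b<\ell$: the spliced walk keeps its first and last edges, uses only edges of $W_i$, and is strictly shorter, contradicting minimality.

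The genuine gap is the boundary case you set aside, and the reason you give for dismissing it is incorrect on two counts. First, even when every walk runs from $s$ to $t$ with $s\neq t$, a walk may re-enter $t$ before its final edge (first edge $(s,t)$, then $(t,u)$, then final edge $(u,t)$); then two of its edges enter $t$ and the repeated entry vertex is the head of the last edge, which is exactly your unresolved case $b=\ell$ --- distinctness of the start and end vertices only rules out the other degeneracy. Second, the lemma is never applied in the paper with the full sets $\eout(s)$ and $\ein(t)$: \Cref{lem:random-eval-rank} invokes it (through \Cref{lem:symbolic-det-char}) for arbitrary subsets $S\sub\eout(s)$ and $T\sub\ein(t)$ of size $\lambda$, and \Cref{lem:uptok-nonzero} for arbitrary $S,T\sub E_\new$; for a proper subset $T$, your splice at $b=\ell$ ends the walk at an edge $e_a\in\ein(t)$ that need not belong to $T$, so the exchange does not go through. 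Moreover, this corner is genuinely delicate rather than a formality: with $S=\set{(s,t)}$ and $T=\set{(u,t)}$, the single walk $\langle (s,t),(t,u),(u,t)\rangle$ is an edge-disjoint family of walks from $S$ to $T$, yet every walk from $S$ to $T$ uses two edges entering $t$, so no amount of reshuffling or splicing can produce a path collection ending exactly at the edges of $T$ under the paper's definition of a path; this is also precisely the point that the paper's own proof treats tersely when it asserts that a shortest path from $e_i$ to $f_i$ exists in $G_i$. So, as written, your proposal establishes the statement only away from this boundary situation, which must be confronted directly rather than deferred to a special choice of $S$ and $T$ that the paper does not actually use.
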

\begin{proof}
	Let $\langle W_1,\dots, W_r\rangle$ be a collection of edge-disjoint walks from $S$ to $T$ in $G$.
	For each index $i\in [r]$, let $e_i$ and $f_i$ be the first and last edges of $W_i$ respectively.
	Note that under these definitions, we have $S = \set{e_1, \dots, e_r}$ and $T = \set{f_1, \dots, f_r}$.
	
	For each $i\in [r]$, let $G_i$ be the subgraph of $G$ including only the edges traversed by $W_i$.
	Let $P_i$ be a shortest path from $e_i$ to $f_i$ in $G_i$.
	These paths are edge-disjoint, since they live in subgraphs on disjoint sets of edges.
	Thus  $\langle P_1, \dots, P_r\rangle$ is a collection of $r$ edge-disjoint paths from $S$ to $T$ in $G$, as desired.
\end{proof}

\begin{corollary}
	\label{lem:symbolic-det-char}
	Let $S, T\sub E$ be subsets of edges of size $|S| = |T| = r$.
	Then \[\det~(I-X)^{-1}[S,T]\]
	is a nonzero formal power series if and only if $G$ contains $r$ edge-disjoint paths from $S$ to $T$. 
\end{corollary}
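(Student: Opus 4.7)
The plan is to deduce the corollary as an almost immediate consequence of \Cref{cor:disjoint-summation} together with \Cref{lem:walk-to-path}, after handling one technical point: that distinct edge-disjoint collections contribute distinct (hence non-cancelling) monomials to the sum.

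First, by \Cref{cor:disjoint-summation} we have the identity
\[
\det\,(I-X)^{-1}[S,T] \;=\; \sum_{\ell=1}^\infty \grp{\sum_{\cl{C}\in \cl{D}_\ell(S,T)} \xi(\cl{C})}.
\]
For the easier direction (\emph{if}), suppose $G$ has $r$ edge-disjoint paths from $S$ to $T$.  Since paths are in particular walks, these form an element of $\cl{D}_\ell(S,T)$ for some $\ell$, so the right-hand side contains at least one term.  To conclude it is nonzero as a formal power series, I would establish the following subclaim: for any two distinct collections $\cl{C}\neq \cl{C}'$ in $\cl{D}_\ell(S,T)$, the monomials $\xi(\cl{C})$ and $\xi(\cl{C}')$ are distinct.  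Granting this, each $\cl{D}_\ell(S,T)$ contributes a sum of pairwise distinct nonzero monomials, which cannot cancel, so the whole series is nonzero as soon as some $\cl{D}_\ell(S,T)$ is nonempty.

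To prove the subclaim, I would exploit edge-disjointness.  In an edge-disjoint collection $\cl{C} = \langle W_1,\dots, W_r\rangle$, each edge appears in at most one walk and in that walk at most once, so each variable $x_{ef}$ appears at most once in $\xi(\cl{C})$; hence $\xi(\cl{C})$ is a square-free monomial whose support is exactly the set of consecutive-edge transitions $(e,f)$ occurring in the collection.  From this set of transitions, $\cl{C}$ can be uniquely reconstructed: for each starting edge $e_i\in S$, form a walk by repeatedly following the unique transition leaving the current edge (there is at most one, by edge-disjointness) until no transition is available, which must happen at an edge of $T$.  Thus the map $\cl{C}\mapsto \xi(\cl{C})$ is injective on $\cl{D}_\ell(S,T)$, proving the subclaim.

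For the converse direction (\emph{only if}), suppose $\det\,(I-X)^{-1}[S,T]$ is a nonzero formal power series.  Then by the identity above, at least one $\cl{D}_\ell(S,T)$ must be nonempty, which means $G$ contains $r$ edge-disjoint walks from $S$ to $T$.  Applying \Cref{lem:walk-to-path} converts this collection of walks into a collection of $r$ edge-disjoint paths from $S$ to $T$, as required.  The main (and really only nontrivial) obstacle is the no-cancellation argument, but once one notices that edge-disjointness forces the weight monomial to be square-free and to record each outgoing transition uniquely, the proof is essentially a single observation.
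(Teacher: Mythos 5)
Your ``only if'' direction is fine and coincides with the paper's (nonemptiness of some $\cl{D}_\ell(S,T)$ plus \Cref{lem:walk-to-path}). The gap is in the subclaim driving your ``if'' direction: that $\cl{C}\mapsto\xi(\cl{C})$ is injective on $\cl{D}_\ell(S,T)$. Edge-disjointness in the definition of $\cl{D}_\ell(S,T)$ is only a \emph{pairwise} condition between the $r$ walks; it does not prevent a single walk from traversing the same edge (or vertex) several times. So your assertion that each edge appears ``in that walk at most once'' is unjustified, a variable $x_{ef}$ may occur with multiplicity in $\xi(\cl{C})$, an edge of the collection may have two or more outgoing transitions, and your reconstruction rule ``follow the unique transition leaving the current edge'' is not well defined. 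Concretely, take $r=1$ with $e=(w,u)$, $d=(u,v)$, $p_1=(v,x)$, $p_2=(x,u)$, $q_1=(v,y)$, $q_2=(y,u)$, $f=(v,z)$, $S=\set{e}$, $T=\set{f}$. Then $W=\langle e,d,p_1,p_2,d,q_1,q_2,d,f\rangle$ and $W'=\langle e,d,q_1,q_2,d,p_1,p_2,d,f\rangle$ are distinct elements of $\cl{D}_9(S,T)$ with $\xi(W)=\xi(W')$, and over a field of characteristic two these two terms cancel. Hence ``each $\cl{D}_\ell(S,T)$ contributes a sum of pairwise distinct monomials'' is false, and your conclusion that the series is nonzero as soon as some $\cl{D}_\ell(S,T)$ is nonempty is not established by your argument (that statement is true, but only via a more careful analysis).

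The repair is exactly what the paper does: in the ``if'' direction, anchor the argument at a collection $\cl{C}$ of $r$ edge-disjoint \emph{paths}, which exists by hypothesis. For paths no edge or vertex is repeated within a walk, so in $\xi(\cl{C})$ every variable is distinct, each edge of the collection has at most one outgoing transition, and the reconstruction you describe does work; one then argues that no other collection of walks whatsoever (path or not, at any total length) can produce the monomial $\xi(\cl{C})$, so that single monomial survives with coefficient $1$. In short, uniqueness of the weight needs to be proved only at a witnessing path collection, not on all of $\cl{D}_\ell(S,T)$, where it simply fails.
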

\begin{proof}
	Suppose $\cl{C} = \langle P_1, \dots, P_r\rangle$ is a collection of $r$ edge-disjoint paths from $S$ to $T$ in $G$.
	Then the term
	$\xi(\cl{C})$ occurs in the expansion of 
	\begin{equation}
		\label{ix-inv-time}
		\det~(I-X)^{-1}[S,T]
	\end{equation}
	given by \Cref{cor:disjoint-summation}.
	Moreover, any collection of walks $\cl{C}'\neq \cl{C}$ from $S$ to $T$ has  weight $\xi(\cl{C}')\neq \xi(\cl{C})$, because $\cl{C}$ consists of edge-disjoint paths (so looking at the variables appearing in $\xi(\cl{C})$, we can recover $\cl{C}$ uniquely).
	Hence, no other term from the summation in \Cref{cor:disjoint-summation} produces the same monomial $\xi(\cl{C})$.
	So $\xi(\cl{C})$ appears in \Cref{ix-inv-time}
	with nonzero coefficient, which implies that the determinant from \Cref{ix-inv-time} is a nonzero formal power series.
	
	Suppose now that $G$ does not contain $r$ edge-disjoint paths from $S$ to $T$.
	The contrapositive of \Cref{lem:walk-to-path} implies that $G$ does not contain $r$ edge-disjoint walks from $S$ to $T$ either. 
	Then \Cref{cor:disjoint-summation} implies that \Cref{ix-inv-time}
	is the zero polynomial.
	This proves the claim.
\end{proof}

\begin{remark}[Directed Acyclic Graphs]
	\label{remark:DAGs}
	The arguments in this section  work with formal power series because when designing a generating function for edge-disjoint walks, we naturally run into infinite sums.
	From a pedagogical perspective, dealing with these infinite objects may make teaching these algorithms appear somewhat difficult (say, in an undergraduate course).
	One way to avoid this issue is to focus on the \textsf{APC} and \textsf{$k$-APC} problems in the special case of directed acyclic graphs (DAGs).
	
	DAGs are an interesting case for \textsf{APC} and \textsf{$k$-APC}, since we do not know of algorithms solving these problems on DAGs faster than the case of general directed graphs, and the best conditional lower bounds we have for these problems hold in the case of DAGs \cite{ap-bounded-mincut}.
	
	Focusing on the setting of DAGs leads to two simplifications.
	First, any walk in a DAG has length at most $n-1$. 
	Hence over DAGs, $X^\ell$ is the all-zeros matrix for $\ell\ge n$, so \Cref{prop:geo-series} shows that $\Gamma = (I-X)^{-1}$ is a  matrix whose entries are polynomials.
	Thus, determinants of submatrices of $\Gamma$ are just polynomials, instead of formal power series involving infinite sums.
	Second, any walk in a DAG is a path.
	Thus we can skip the step in \Cref{lem:walk-to-path} of going from edge-disjoint walks to edge-disjoint paths.
\end{remark}

\section{Connectivity}
\label{sec:APC}

\subsection{Random Evaluation}

Let $X$ be the symbolic edge-adjacency matrix defined in \Cref{sec:enumeration}.

For all pairs of edges $(e,f)$ in $G$, we introduce independent, uniform random values $a_{ef}$ over $\mathbb{F}$.
Let $A$ be the matrix obtained from $X$ by evaluating each variable $x_{ef}$ at $a_{ef}$.
That is, $A$ is the random $m\times m$ edge-adjacency matrix of $G$, defined by taking 
\[A = \begin{cases}
	a_{ef} & \text{if }\head(e)=\tail(f) \\
	0 & \text{otherwise.}
\end{cases}\]

\begin{lemma}
	\label{lem:random-det-char}
	Let $S, T\sub E$ be subsets of edges of size $|S| = |T| = r \le n-1$.
	Then \[\det~(I-A)^{-1}[S,T]\]
	is nonzero with probability at least $1-1/n^3$ if and only if 
	$G$ contains $r$ edge-disjoint paths from $S$ to $T$. 
\end{lemma}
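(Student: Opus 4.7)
The plan is to reduce this probabilistic biconditional to its symbolic analogue \Cref{lem:symbolic-det-char} by invoking \Cref{cor:rat-test} to transfer the symbolic characterization over to random evaluations. The key conceptual point is that even though the entries of $(I-X)^{-1}$ are formal power series, the determinant of any square submatrix of $(I-X)^{-1}$ can be written as a ratio of two low-degree polynomials in the variables $\set{x_{ef}}$, which is exactly the setting of \Cref{cor:rat-test}.

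First I would dispatch the easy direction. If $G$ contains no $r$ edge-disjoint paths from $S$ to $T$, then \Cref{lem:symbolic-det-char} says $\det((I-X)^{-1}[S,T])$ is the zero formal power series. Equivalently, its representation as a rational function has a numerator that is the zero polynomial, so every evaluation vanishes. Consequently $\det((I-A)^{-1}[S,T])$ is nonzero with probability zero, which fails the target bound $1 - 1/n^3$.

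For the other direction, I would assume $G$ contains $r$ edge-disjoint paths and invoke the adjugate formula $(I-X)^{-1} = \adj(I-X)/\det(I-X)$ to write
\[
\det\bigl((I-X)^{-1}[S,T]\bigr) = \frac{\det\bigl(\adj(I-X)[S,T]\bigr)}{\det(I-X)^r}.
\]
Each entry of $\adj(I-X)$ is a cofactor of the $m\times m$ matrix $I-X$, whose entries have degree at most one, so the numerator above has degree at most $r(m-1)$ and the denominator has degree at most $rm$. Since $G$ is simple we have $m \le n^2$, and since $r \le n-1$, both degrees are bounded by $n^3$. By \Cref{lem:symbolic-det-char} the rational function is nonzero, so its numerator is a nonzero polynomial; its denominator is nonzero since $\det(I-X)$ has constant term $\det(I) = 1$. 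Applying \Cref{cor:rat-test} with $d = n^3$ and $|\mathbb{F}| \ge 12 n^6$ gives random evaluation nonzero with probability at least $1 - 2n^3/(12n^6) = 1 - 1/(6n^3) \ge 1 - 1/n^3$. On this event $\det(I-A) \neq 0$, so $I - A$ is invertible and $\det((I-A)^{-1}[S,T])$ equals this nonzero evaluation.

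The main obstacle is really just bookkeeping: correctly rewriting the determinant of the inverse submatrix as a ratio of polynomials and bounding their degrees tightly enough. One alternative would be Jacobi's complementary minor identity, which in characteristic two gives the cleaner form $\det((I-X)^{-1}[S,T]) = \det((I-X)[E\setminus T, E\setminus S])/\det(I-X)$ with degrees at most $m \le n^2$; but the simple adjugate approach already affords enough slack to reach the $1-1/n^3$ target.
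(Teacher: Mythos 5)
Your proposal is correct and follows essentially the same route as the paper's proof: the zero-polynomial argument for the easy direction, the adjugate formula with linearity of the determinant giving $\det(\adj(I-X))[S,T]/\det(I-X)^r$, the degree bounds on numerator and denominator, and an application of \Cref{cor:rat-test} with the field-size choice $2^q \ge 12n^6$. The only differences are cosmetic (you bound degrees via $m \le n^2$ rather than $rm < nm$, and you explicitly note the invertibility of $I-A$ on the good event), so there is nothing substantive to add.
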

\begin{proof}
	By \Cref{lem:symbolic-det-char}, the determinant
	\begin{equation}
		\label{eq:x-inv}
		\det~(I-X)^{-1}[S,T]
	\end{equation}
	is a nonzero formal power series if and only if $G$ contains $r$ edge-disjoint paths from $S$ to $T$.
	
	So suppose $G$ does not contain $r$ edge-disjoint paths from $S$ to $T$.
	Then \cref{eq:x-inv} is the zero polynomial, so its random evaluation \[\det~(I-A)^{-1}[S,T]\] 
	must vanish as claimed.
	
	Otherwise, suppose $G$ does contain $r$ edge-disjoint paths from $S$ to $T$.
	Then \cref{eq:x-inv} is a  nonzero formal power series.
	
	By the formula for the inverse of a matrix, we know that 
	\begin{equation}
		\label{eq:rational-inverse}
		(I-X)^{-1}[S,T] = \frac{(\adj(I-X))[S,T]}{\det~(I-X)}.
	\end{equation}
	
	Since $(I-X)$ has ones along the diagonal and its other entries have constant term zero, we know that $\det(I-X)$ is a polynomial with constant term $1$, so by \Cref{prop:inverse} the multiplicative inverse of $\det(I-X)$ is a well-defined formal power series.
	Thus, \Cref{eq:rational-inverse} can be viewed as an equality between two matrices of formal power series.
	
	For convenience, write  $Q = \det (I-X)$.
	Since $S$ and $T$ are sets of size $r$, by linearity of the determinant we have 
	\begin{equation}
		\label{eq:det-to-adj}
		\det~(I-X)^{-1}[S,T] = \frac{\det~(\adj(I-X))[S,T]}{Q^r}.
	\end{equation}
	By assumption, the left-hand side of \Cref{eq:det-to-adj} is nonzero.
	Consequently, the numerator
	\[\det~(\adj(I-X))[S,T]\]
	on the right-hand side of \Cref{eq:det-to-adj}
	must be a nonzero polynomial.
	Moreover, since each entry of $X$ has degree at most $1$, each entry of $\adj(I-X)$ has degree less than $m$, so this numerator polynomial has overall degree less than $r m$.
	Similarly, in the previous discussion we observed that $Q$ is a polynomial with constant term $1$, so
	the denominator 
	\[Q^r = \grp{\det~(I-X)}^r \]
	has constant term $1$ and is thus a nonzero polynomial as well.
	Since each entry of $X$ has degree at most $1$, this denominator polynomial has degree at most $rm$.
	
	The previous paragraph shows that the expression from \cref{eq:det-to-adj} is the ratio of two nonzero polynomials, each with degree at most $r m < nm$.
	
	Then by rational identity testing (\Cref{cor:rat-test}), the random evaluation 
	\[\det~(I-A)^{-1}[S,T]\]
	of 
	\Cref{eq:x-inv} over $\mathbb{F}$ is nonzero with probability at least $1-2nm/2^q$.
	
	Since we picked $q$ such that $2^q\ge 12n^6 > n^3\cdot (2nm)$, the desired result follows.
\end{proof}

\begin{lemma}[Connectivity via Rank]
	\label{lem:random-eval-rank}
	With high probability, for all $s,t\in V$ we have
	\[\lambda(s,t) = \rank~ (I-A)^{-1}[\eout(s),\ein(t)].\]
\end{lemma}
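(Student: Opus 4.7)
The plan is to characterize both $\lambda(s,t)$ and the rank of $(I-A)^{-1}[\eout(s),\ein(t)]$ in terms of a common combinatorial quantity. Define $r^*(s,t)$ to be the largest integer $r$ for which there exist subsets $S\sub \eout(s)$ and $T\sub \ein(t)$ of size $r$ such that $G$ contains $r$ edge-disjoint paths from $S$ to $T$. The key combinatorial observation is $r^*(s,t) = \lambda(s,t)$: given $\lambda(s,t)$ edge-disjoint $st$-paths, their first edges form a size-$\lambda(s,t)$ subset of $\eout(s)$ and their last edges form a size-$\lambda(s,t)$ subset of $\ein(t)$ (distinctness in both cases follows from edge-disjointness); conversely, $r$ edge-disjoint paths from $S\sub \eout(s)$ to $T\sub \ein(t)$ are automatically $r$ edge-disjoint $st$-paths, since such paths start at $s$ and end at $t$. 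Notably, this identification needs only the definition of $\lambda(s,t)$ and not Menger's theorem.

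For the upper bound $\rank(I-A)^{-1}[\eout(s),\ein(t)] \le \lambda(s,t)$, I would appeal to the \emph{deterministic} direction of \Cref{lem:random-det-char}: for any subsets $S,T$ of common size $r > r^*(s,t)$, there are no $r$ edge-disjoint paths from $S$ to $T$ in $G$, and hence $\det(I-A)^{-1}[S,T] = 0$ with probability one. Since this holds for every $r\times r$ submatrix with $r > r^*(s,t) = \lambda(s,t)$, the rank of $(I-A)^{-1}[\eout(s),\ein(t)]$ is at most $\lambda(s,t)$, regardless of the random choices.

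For the matching lower bound, I would fix for each pair $(s,t)$ a witness $(S^*,T^*)$ achieving size $r^*(s,t)$. The probabilistic direction of \Cref{lem:random-det-char} (applicable since $r^*(s,t) \le n-1$) then ensures that $\det(I-A)^{-1}[S^*,T^*] \ne 0$ with probability at least $1 - 1/n^3$, so the full submatrix $(I-A)^{-1}[\eout(s),\ein(t)]$ has rank at least $r^*(s,t) = \lambda(s,t)$. Taking a union bound over the at most $n^2$ pairs $(s,t)$, the lower bound holds simultaneously for all pairs with probability at least $1 - 1/n$, completing the proof.

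The potential pitfall I would flag is the risk of an exponential union bound: naively one might want to control the determinant of every $r\times r$ submatrix for every threshold $r$, which would require union bounding over $\binom{m}{r}^2$ events per pair. This blow-up is avoided precisely because one direction of \Cref{lem:random-det-char} is deterministic. That asymmetry---nonexistence of edge-disjoint paths forces the determinant to vanish identically, while existence only forces it to be nonzero with high probability on random evaluation---is the only structural point of the argument, and once it is observed, the proof reduces to bookkeeping.
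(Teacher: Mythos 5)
Your proposal is correct and follows essentially the same route as the paper's proof: identify $\lambda(s,t)$ with the largest $r$ admitting $r$ edge-disjoint paths from a subset of $\eout(s)$ to a subset of $\ein(t)$, invoke \Cref{lem:random-det-char} together with the definition of rank, and finish with a union bound over the $n^2$ pairs. Your explicit observation that the ``no paths $\Rightarrow$ determinant vanishes'' direction holds deterministically (so only one witness pair $(S^*,T^*)$ per $(s,t)$ needs the probabilistic guarantee, avoiding any union bound over submatrices) is a point the paper leaves implicit, and it is a welcome clarification rather than a deviation.
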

\begin{proof}
	Fix a pair of vertices $(s,t)$.
	Abbreviate $\lambda = \lambda(s,t)$.
	
	By \Cref{lem:random-det-char} and the definition of connectivity, with probability at least $1-1/n^3$, $\lambda$ is the largest nonnegative integer such that there exist subsets $S\sub \eout(s)$ and $T\sub \ein(t)$ of size $\lambda$ 
	with the property that 
	\begin{equation}
		\label{eq:inverse-expr}
		\det~(I-A)^{-1}[S,T]
	\end{equation}
	is nonzero.
	By definition, this is the rank of $(I-A)^{-1}[\eout(s),\ein(t)]$, so 
	\[\lambda = \rank~(I-A)^{-1}[\eout(s),\ein(t)]\]
	with probability at least $1-1/n^3$ for our fixed pair of vertices $(s,t)$.
	
	Then by the union bound, with probability at least $1-1/n$ we have 
	\[ \lambda(s,t) = \rank~ (I-A)^{-1}[\eout(s),\ein(t)]\]
	for all pairs of vertices $(s,t)$ in $G$, as desired.
\end{proof}

\begin{algorithm}[t]
	\caption{The algorithm solving \textsf{APC} from \cite{CheungLauLeung2013}.}
	\label{alg:APC}

	Compute the matrix $M = (I - A)^{-1}$.
	\label{step:invert}

	For each pair of vertices $(s,t)$, return
	\[\rank M[\eout(s),\ein(t)]\]
	as the value of $\lambda(s,t)$.
	\label{step:decode}
\end{algorithm}

\subsection{The Algorithm}

\apc*
\begin{proof}

	By \Cref{lem:random-eval-rank}, \Cref{alg:APC} solves \textsf{APC} correctly with high probability.
	It remains to bound the runtime of the algorithm.
	
	In step 1 of \Cref{alg:APC}, we compute $M$ by inverting an $m\times m$ matrix.
	By \Cref{matrix:inverse}, this matrix inversion can be performed in $\tilde{O}(m^\omega)$ time.
	
	In step 2 of \Cref{alg:APC}, we compute $\lambda(s,t)$ for each pair of vertices $(s,t)$ by computing the rank of a $\deg_{\text{out}}(s)\times \deg_{\text{in}}(t)$ matrix.
	By \Cref{matrix:rank}, this takes 
	\begin{equation}
		\label{eq:degree-bound}
		\sum_{s,t\in V} \deg_{\text{out}}(s)(\deg_{\text{in}}(t))^{\omega-1}
	\end{equation}
	time asymptotically.
	For each pair of vertices $(s,t)$, we have 
	\[\deg_{\text{out}}(s)(\deg_{\text{in}}(t))^{\omega-1} = (\deg_{\text{in}}(t))^{\omega-2}\cdot \deg_{\text{out}}(s)\deg_{\text{in}}(t) \le n^{\omega-2}\cdot \deg_{\text{out}}(s)\deg_{\text{in}}(t).\]
	By substituting this inequality  into \cref{eq:degree-bound}, and observing that the sum of in-degrees and sum of out-degrees are each equal to the number of edges $m$ in $G$, we have 
	\[\sum_{s,t\in V} \deg_{\text{out}}(s)(\deg_{\text{in}}(t))^{\omega-1} \le \sum_{s,t\in V}n^{\omega-2}\cdot \deg_{\text{out}}(s)\deg_{\text{in}}(t) = n^{\omega-2}m^2.\]
	Since $n-1\le m$, the runtime of this step is also upper bounded by $\tilde{O}(m^\omega)$.
	
	So we can solve \textsf{APC} in $\tilde{O}(m^\omega)$ time as claimed.
\end{proof}

\section{Bounded Connectivity}
\label{sec:k-APC}

The \textsf{APC} algorithm from \Cref{alg:APC} first
\begin{enumerate}
	\item inverts an $m\times m$ matrix, and then
	\item computes ranks of submatrices, whose dimensions depend on degrees of nodes in $G$.
\end{enumerate}
Even reading the matrix entries in these steps takes $\Omega(m^2)$ time.
To obtain a faster  algorithm for the \textsf{$k$-APC} problem, we modify these steps to work with much smaller matrices.

The first idea is to \emph{reduce degrees} in $G$ while preserving the values of small connectivities.
In \Cref{subsec:degree-reduction}, we present a simple transformation (from \cite[Section 5]{k-APC}) which decreases the degrees of nodes in $G$ to $k$, while preserving the $\min(k,\lambda(s,t))$ values.
Following this modification, we only need to compute ranks of $k\times k$ submatrices in step 2 above.

The second idea is to simplify $\Gamma = (I-X)^{-1}$ using a \emph{variable substitution} which still ensures that determinants of submatrices of $\Gamma$ can detect up to $k$ edge-disjoint paths in $G$.
We present this substitution in \Cref{subsec:substitution}.
This modification turns $\Gamma$ into a rank $kn$ matrix, which lets us replace the inversion of an $m\times m$ matrix in step 1 above with the easier inversion of a $kn\times kn$ matrix instead.

These two speed-ups combined then let us solve \textsf{$k$-APC} in $\tilde{O}((kn)^\omega)$ time.

\subsection{Degree Reduction}
\label{subsec:degree-reduction}

Let $G$ be the input graph on $n$ nodes and $m$ edges.
We modify $G$ to create a new graph.

For each vertex $v\in V$, we introduce two new nodes $v_{\t{in}}$ and $v_{\t{out}}$.
Then we replace each edge $(u,v)\in E$ with an edge $(u_{\t{out}}, v_{\t{in}})$.
For each $v\in V$, we also include $k$ parallel edges from $v$ to $v_{\t{out}}$, and $k$ parallel edges from $v_{\t{in}}$ to $v$.
Let $G_{\t{new}}$ be the new graph constructed in this way, and let $V_{\t{new}}$ and $E_{\t{new}}$ be its vertex and edge sets respectively.
We refer to the nodes in $V\subseteq V_{\t{new}}$ which were originally in $G$ as the \emph{original vertices}.
For $s,t\in V$, we still let $\lambda(s,t)$ denote the connectivity from $s$ to $t$ in the original graph $G$.
In the rest of this section, we let $\eout(s)$ and $\ein(t)$ denote the sets of edges exiting $s$ and entering $t$ in $G_\new$.

We write $n_\new = |V_\new| = 3n$ and $m_{\new} = |E_\new| = m+2kn$.

\begin{lemma}[Preserving Small Connectivities]
	\label{small:conn}
	For any $s,t\in V$, the connectivity from $s$ to $t$ in $G_\new$ is $\min(k,\lambda(s,t))$.
\end{lemma}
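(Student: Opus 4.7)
The plan is to establish the equality by proving two matching inequalities. Write $\mu$ for the connectivity from $s$ to $t$ in $G_\new$, and let $r = \min(k, \lambda(s,t))$; I want to show $\mu \le r$ and $\mu \ge r$.

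For the upper bound $\mu \le r$, two observations suffice. First, the only edges exiting $s$ in $G_\new$ are the $k$ parallel edges from $s$ to $s_{\text{out}}$, so trivially $\mu \le k$. Second, I will project edge-disjoint $st$-paths in $G_\new$ to edge-disjoint walks in $G$. Any $st$-path in $G_\new$ must alternate between gadget edges of the form $v \to v_{\text{out}}$ and $v_{\text{in}} \to v$ and edges of the form $(u_{\text{out}}, v_{\text{in}})$, where the latter type is in bijection with $E$. Replacing each $(u_{\text{out}}, v_{\text{in}})$ step by the corresponding edge $(u,v) \in E$ and contracting the gadget steps produces a walk from $s$ to $t$ in $G$. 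A collection of $\mu$ edge-disjoint paths in $G_\new$ uses disjoint sets of $(u_{\text{out}},v_{\text{in}})$ edges, which translate to disjoint sets of original edges, so the projected walks are edge-disjoint. Applying \Cref{lem:walk-to-path} yields $\mu$ edge-disjoint $st$-paths in $G$, so $\mu \le \lambda(s,t)$. Together with $\mu \le k$, this gives $\mu \le r$.

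For the lower bound $\mu \ge r$, I will lift $r$ edge-disjoint $st$-paths $P_1, \dots, P_r$ in $G$ to edge-disjoint $st$-paths $\tilde P_1, \dots, \tilde P_r$ in $G_\new$. Each $P_i$, as a sequence of original edges, is lifted by replacing every edge $(u,v)$ with its corresponding unique edge $(u_{\text{out}}, v_{\text{in}})$ in $G_\new$, and by inserting, at $s$, at $t$, and at each internal vertex, the two required gadget edges. Since the $P_i$ are edge-disjoint in $G$, their lifts are automatically disjoint on edges of the form $(u_{\text{out}}, v_{\text{in}})$. The only remaining issue is assigning the gadget edges: at each original vertex $v$ visited by several of the paths we need to assign distinct copies from the $k$ parallel edges $v \to v_{\text{out}}$ and $v_{\text{in}} \to v$, and similarly at $s$ and $t$ we must assign distinct copies of $s \to s_{\text{out}}$ and $t_{\text{in}} \to t$. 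At most $r$ of the lifted paths pass through any vertex $v$ (at most one per $P_i$, since each $P_i$ is a simple path), and $r \le k$, so enough parallel copies are available at every vertex.

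The main obstacle is really just this bookkeeping in the lifting direction: checking that edge-disjointness of the $P_i$ in $G$ together with the bound $r \le k$ on the multiplicity of gadget edges suffices to produce edge-disjoint lifts. Once that is set up, both inequalities follow immediately, and combining them yields $\mu = r = \min(k, \lambda(s,t))$.
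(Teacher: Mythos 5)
Your proof is correct and follows essentially the same route as the paper's: project edge-disjoint $st$-paths of $G_\new$ back to $G$ (together with the outdegree-$k$ bound at $s$) for one inequality, and lift $\min(k,\lambda(s,t))$ edge-disjoint paths of $G$ into $G_\new$ by assigning distinct parallel gadget copies, which is possible since $\min(k,\lambda(s,t))\le k$, for the other. The only cosmetic differences are that you pass through walks and \Cref{lem:walk-to-path} where the paper reads off paths directly from the sequence of visited original vertices, and you argue availability of gadget copies by a counting argument rather than the paper's explicit ``give the $i$-th path the $i$-th parallel copy'' assignment.
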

\begin{proof}
	Fix $s,t\in V$. 
	Given an $st$-path $P'$ in $G_\new$, we recover a unique $st$-path $P$ in $G$ by looking at the sequence of original vertices visited by $P'$.
	Using this construction, any collection of $r$ edge-disjoint paths from $s$ to $t$ in $G_\new$ recovers a collection of $r$ edge-disjoint paths from $s$ to $t$ in $G$.
	So the connectivity from $s$ to $t$ in $G_\new$ is at most $\lambda(s,t)$.
	
	Since $s$ has outdegree $k$ in $G_\new$, the connectivity from $s$ to $t$ in $G_\new$ is also at most $k$.
	
	Thus the connectivity from $s$ to $t$ in $G_\new$ is at most $\min(k,\lambda(s,t))$.
	Set $\lambda = \min(k,\lambda(s,t))$.
	
	By definition, there are edge-disjoint paths $P_1, \dots, P_\lambda$ in $G$ from $s$ to $t$.
	
	For each $i\in [\lambda]$, let $P'_i$ be the $st$-path in $G_{\t{new}}$ which passes through the same sequence of original vertices as $P_i$, and includes, for each edge $(u,v)$ in $P_i$, the $i^{\text{th}}$ parallel edge from $u_{\t{out}}$ to $v_{\t{in}}$ (we assume there is some fixed ordering among all such parallel edges).
	This is possible since $\lambda\le k$.
	Since the $P_i$ are edge-disjoint, the $P'_i$ are edge-disjoint as well.
	So the connectivity from $s$ to $t$ in $G_{\new}$ is at least $\lambda = \min(k,\lambda(s,t))$.

	Thus the connectivity from $s$ to $t$ in $G_{\new}$ is equal to $\min(k,\lambda(s,t))$, as claimed.
\end{proof}

\subsection{Low-Rank Edge-Adjacency}
\label{subsec:substitution}

Recall the definitions from \Cref{subsec:degree-reduction}.
We define the matrix $X$ from \Cref{sec:enumeration} with respect to the new graph $G_\new$ (so now rows and columns of $X$ are indexed by edges in $E_\new$).

For each pair $(e,j)\in E_\new\times [k]$ we introduce an indeterminate $y_{ej}$.

Similarly, for each pair $(j,f)\in [k]\times E_\new$ we introduce an indeterminate $z_{jf}$.

We define the $m_\new\times kn_\new$ matrix $Y$ by setting
\[Y[e,(v,j)] = \begin{cases}
	y_{ej} & \text{if }\head(e)=v \\
	0 & \text{otherwise.}
	
\end{cases}\]

Similarly, we define the $kn_\new\times m_\new$ matrix $Z$ by setting
\[Z[(v,j),f] = \begin{cases}
	z_{jf} & \text{if }\tail(f) = v\\
	0 & \text{otherwise.}
\end{cases}\]

These matrices are defined so that under the variable substitution
\[x_{ef} = \sum_{j=1}^k y_{ej}z_{jf}\]

the matrix $X$ simplifies to the \emph{low-rank} matrix $YZ$, as depicted in \Cref{fig:rank}.

Previously in \Cref{lem:symbolic-det-char}, we  characterized the existence of edge-disjoint paths in $G$, based off whether determinants of  submatrices of $(I-X)^{-1}$ were nonzero.
The following result shows that a similar characterization holds when we replace $X$ with $YZ$, provided we only care about routing up to $k$ edge-disjoint paths.

\begin{figure}
	\centering
	\begin{tikzpicture}[scale=3,
vtx/.style={
inner sep=0pt},
edge/.style={-stealth, thick,darkmidnightblue},
mvtx/.style={circle,draw=darkorange,fill=darkorange,inner sep=1pt, minimum width=1pt},
ledge/.style={thick, dotted, darkorange}]

    \def\top{0.3cm};
    \def\half{0cm};
    \def\bottom{-0.3cm};

    \def\stp{1cm};
    \def\sml{0.5cm};

    \def\extr{0.4cm};
    \def\extrw{0.6cm};

    \def\jump{2*\stp + 2*\extr + \extrw}

    \def\dnr{0.3cm}

    \node[vtx] (1) at (0,\half) {};
        \node[vtx] (2) at (\stp,\half) {};
        \node[vtx] (3) at (2*\stp,\half) {};

    \draw[edge] (1)--(2) node[midway, above] {$e$};
        \draw[edge] (2)--(3) node[midway, above] {$f$};

    \node (firstpath) at (\stp,\bottom-\dnr) {\textcolor{darkmidnightblue}{$X[e,f] = x_{ef}$}};


    \node (100) at (2*\stp + \extr,\half) {};
    \node (101) at (2*\stp + \extr + \extrw,\half) {};

    \draw[ultra thick,-latex,alizarin] (100) -- (101);


    \node[vtx] (1e) at (\jump, \half) {};
    \node[vtx] (2e) at (\jump + \stp, \half) {};

        \draw[edge] (1e) -- (2e) node[midway, above] {$e$} ;

    \node[mvtx] (e1) at (\jump + \stp + \sml, \top) {};
        \node[mvtx] (e2) at (\jump + \stp + \sml, \half) {};
        \node[mvtx] (e3) at (\jump + \stp + \sml, \bottom) {};

        \draw[ledge] (2e) -- (e1);
        \draw[ledge] (2e) -- (e2);
    \draw[ledge] (2e) -- (e3);

    \node[vtx] (3e) at (\jump + \stp + 2*\sml, \half) {};

        \draw[ledge] (e1) -- (3e);
        \draw[ledge] (e2) -- (3e);
    \draw[ledge] (e3) -- (3e);

    \node[vtx] (4e) at (\jump + 2*\stp + 2*\sml, \half) {};

    \draw[edge] (3e) -- (4e) node[midway, above] {$f$};

    \node (firstpath) at (\jump + \stp + \sml,\bottom-\dnr) {$\textcolor{darkmidnightblue}{(YZ)[e,f] 
    = y_{e\textcolor{darkorange}{1}}z_{\textcolor{darkorange}{1}f} + 
    y_{e\textcolor{darkorange}{2}}z_{\textcolor{darkorange}{2}f} +
    y_{e\textcolor{darkorange}{3}}z_{\textcolor{darkorange}{3}f}
    }$};

\end{tikzpicture}
	\caption{When we substitute $x_{ef} = y_{e1}z_{1f} + \dots + y_{ek}z_{kf}$ (pictured here for $k=3$) into $X$, we get the ``simpler'' matrix $YZ$.
		While powers of $X$ enumerate walks in $G$, powers of $YZ$ intuitively enumerate walks in a modified graph where after traversing an edge $e=(u,v)$, we have $k$ different versions of $v$ we can choose to go to.
		The $y_{ej}$ and $z_{jf}$ variables in this enumeration only keep track of the individual edges traversed and versions of vertices we pick, instead of recording all pairs of consecutive edges traversed like the $x_{ef}$ variables.
		This simpler enumeration suffices to solve \textsf{$k$-APC}.}
	\label{fig:rank}
\end{figure}

\begin{lemma}
	\label{lem:uptok-nonzero}
	Let $S,T\sub E_\new$ be subsets of edges with size $|S|=|T| = r\le k$. Then
	\[\det~(I-YZ)^{-1}[S,T]\]
	is a nonzero formal power series if and only if $G_{\new}$ has $r$ edge-disjoint paths from $S$ to $T$.
\end{lemma}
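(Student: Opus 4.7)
The plan is to mirror the development of \Cref{sec:enumeration}, applying the same reasoning to the low-rank matrix $YZ$ in place of $X$. First I would observe that $(YZ)[e,f] = \sum_{j=1}^k y_{ej}z_{jf}$ when $\head(e)=\tail(f)$ and is zero otherwise, so by the same reasoning as \Cref{prop:power-walk} the entry $(YZ)^{\ell}[e,f]$ enumerates length-$(\ell+1)$ walks from $e$ to $f$, where each walk is weighted by a sum over all labelings of its $\ell$ transitions by elements of $[k]$. Combining this with the geometric series formula (\Cref{prop:geo-series}) and the determinant expansion (as in \Cref{lem:det-definition}) yields
\[\det~(I-YZ)^{-1}[S,T] = \sum_{\ell\ge 1}\sum_{\cl{C}\in\cl{F}_\ell(S,T)}\sum_{\Sigma} \eta(\cl{C},\Sigma),\]
where $\Sigma$ ranges over labelings of the transitions in $\cl{C}$ by elements of $[k]$ and $\eta(\cl{C},\Sigma)$ records the product of the corresponding $y$- and $z$-variables.

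Next, I would adapt the suffix-swap cancellation of \Cref{lem:lgv}: when swapping the suffixes of two intersecting walks at a common edge, we also swap the labels along those suffixes, which preserves the multiset of labeled transitions and hence preserves $\eta(\cl{C},\Sigma)$. Over characteristic two the paired contributions cancel, leaving
\[\det~(I-YZ)^{-1}[S,T] = \sum_{\ell\ge 1}\sum_{\cl{C}\in\cl{D}_\ell(S,T)}\sum_{\Sigma} \eta(\cl{C},\Sigma),\]
summing only over edge-disjoint walk collections. One direction then follows immediately: if $G_\new$ has no $r$ edge-disjoint paths from $S$ to $T$, then by \Cref{lem:walk-to-path} it has no $r$ edge-disjoint walks either, so the sum above is empty and the determinant vanishes.

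The main obstacle is the converse direction, where I need to exhibit a labeled collection whose monomial survives. Suppose $G_\new$ contains edge-disjoint paths $P_1,\dots,P_r$, ordered so that $P_i$ begins at the $i$-th edge of $S$. Since $r\le k$, I can define the labeling $\Sigma^\star$ that assigns label $i$ to every transition of $P_i$, and this gives a valid term in the sum. I would then argue that the monomial $\eta(\cl{P},\Sigma^\star)$ has a unique preimage. The key observation is that if $(\cl{C}',\Sigma')$ produces the same monomial, then for each label $j\in[r]$ every $y_{ej}$ factor forces $e$ to be a non-last edge of $P_j$, and similarly $z_{jf}$ forces $f$ to be a non-first edge of $P_j$; thus transitions labeled $j$ in $\cl{C}'$ stay entirely inside the edge set of $P_j$. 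Since the $P_j$ are edge-disjoint, each walk $W'_i$ in $\cl{C}'$ cannot switch from one $P_j$ to another, so it is confined to a single path; matching the starting edge and the multiset of transitions then forces $W'_i=P_i$ and $\Sigma'=\Sigma^\star$. Hence $\eta(\cl{P},\Sigma^\star)$ appears with coefficient one, and the determinant is nonzero.
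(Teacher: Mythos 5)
Your proof is correct, but it takes a genuinely different route to the key enumeration identity than the paper does. The paper never re-derives anything for $YZ$: it observes that $YZ$ is the image of $X$ under the substitution $x_{ef} = \sum_{j=1}^k y_{ej}z_{jf}$, so the zero direction follows by substituting into the identically-zero series of \Cref{lem:symbolic-det-char}, and the nonzero direction follows by substituting into \Cref{cor:disjoint-summation} to get $\det\,(I-YZ)^{-1}[S,T] = \sum_{\ell}\sum_{\cl{C}\in\cl{D}_\ell(S,T)}\tilde{\xi}(\cl{C})$, after which it exhibits the same surviving monomial you do (label every transition of $P_i$ by $i$, using $r\le k$, and recover $\cl{P}$ uniquely from the variables). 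You instead redo the whole enumeration natively for $YZ$: a labeled analogue of \Cref{prop:power-walk}, of the determinant expansion in \Cref{lem:det-definition}, and of the suffix-swap cancellation in \Cref{lem:lgv}, where labels travel with the swapped suffixes. This is sound --- the swap is still a weight-preserving, fixed-point-free involution on labeled non-disjoint collections, since the canonical choice of $i$, $e$, $j$ depends only on the underlying collection, and since $\sum_\Sigma \eta(\cl{C},\Sigma)=\tilde{\xi}(\cl{C})$ your identity agrees with the paper's --- but it duplicates work that the substitution argument gets for free, which is precisely the economy the paper is advertising over the earlier flow-vector proofs. What your route buys in exchange is self-containedness (no need to argue that substituting constant-term-zero polynomials into a formal power series is well defined and preserves the zero series) and, in the converse direction, a somewhat more carefully justified uniqueness argument: you verify at the level of labeled collections that every transition labeled $j$ must lie inside $P_j$, that edge-disjointness then pins each walk to a single $P_j$, and that matching start edges and the transition multiset forces $W_i'=P_i$ and $\Sigma'=\Sigma^\star$, which fills in details the paper states only tersely.
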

\begin{proof}
	Suppose $G$ does not contain $r$ edge-disjoint paths from $S$ to $T$.
	Then by \Cref{lem:symbolic-det-char}, the determinant 
		\[
		\det~(I-X)^{-1}[S,T]
		\]
	is identically zero as a power series.
	Consequently, the above expression remains zero even if we make the variable substitution
	\begin{equation}
		\label{eq:variable-substitution-low-rank}
		x_{ef} = \sum_{j=1}^k y_{ej}z_{jf}.
	\end{equation}
	Under this substitution, the matrix $X$ simplifies to $YZ$.
	Thus in this case
	\[\det~(I-YZ)^{-1}[S,T]\]
	is the zero polynomial, as claimed.

	Suppose now that $G$ does not contain  $r$ edge-disjoint paths from $S$ to $T$.
	
	By \Cref{cor:disjoint-summation}, we have
	\begin{equation}
		\label{eq:inv-expr}
		\det~(I-X)^{-1}[S,T] = \sum_{\ell=1}^\infty 
		\grp{\sum_{\cl{C}\in\cl{D}_{\ell}(S,T)}
			\xi(\cl{C})
		}.
	\end{equation}
	For each collection of walks $\cl{C}$, let $\tilde{\xi}(\cl{C})$ be the monomial resulting from substituting \Cref{eq:variable-substitution-low-rank}
	into the weight $\xi(\cl{C})$.
	Then we have 
	\begin{equation}
		\label{eq:detYZ}
		\det~(I-YZ)^{-1}[S,T] = \sum_{\ell=1}^\infty 
		\grp{\sum_{\cl{C}\in\cl{D}_{\ell}(S,T)}
			\tilde{\xi}(\cl{C})}.
	\end{equation}
	
	Let $\cl{P} = \langle P_1,\dots, P_r\rangle$ be a collection of edge-disjoint paths from $S$ to $T$ in $G$.
	
	For each $i\in[r]$, let $\cl{E}_i$ be the set of consecutive pairs of edges $(e,f)$ traversed by $P_i$.
	
	Then we have 
	\[\tilde{\xi}(\cl{P}) = \prod_{i=1}^r \prod_{(e,f)\in\cl{E}_i} \grp{\sum_{j=1}^k y_{ej}z_{jf}}.\]
	If we expand the product on the right-hand side of the above equation, we see that one of the monomials produced is of the form 
	\begin{equation}
		\label{eq:unique-monomial}
		\prod_{i=1}^r \prod_{(e,f)\in\cl{E}_i} y_{ei}z_{if}.
	\end{equation}
	Note that in this step, we are using the fact that $r\le k$.
	
	Because $\cl{P}$ is a collection of edge-disjoint paths, the variables appearing in the monomial from \cref{eq:unique-monomial}
	allow us to uniquely recover $\cl{P}$.
	In detail: for each index $i$, the edges $e$ for which the $y_{ei}$ variable appears recovers all edges in $P_i$, and because $P_i$ is a simple path we can recover the order of these edges as well.
	See \Cref{fig:disjoint} for an example of this unique recovery property in the case of $k=2$.
	
	Thus the monomial from \cref{eq:unique-monomial} appears with coefficient $1$.
	Hence \[\det~(I-YZ)^{-1}[S,T]\]
	is a nonzero formal power series as claimed.
\end{proof}

\begin{figure}
	\centering
	\begin{tikzpicture}[scale=2.5,
		vtx/.style={
			inner sep=0pt},
		edge/.style={-stealth, thick,darkmidnightblue},
		mvtx/.style={circle,draw=darkorange,fill=darkorange,inner sep=1pt, minimum width=1pt},
		ledge/.style={thick, dotted, darkorange}]
		
		\def\top{0.6cm};
		\def\half{0cm};
		\def\bottom{-0.6cm};
		
		\def\stp{1cm};
		\def\sml{0.5cm};
		
		\def\extr{0.8cm};
		\def\extrw{0.7cm};
		
		\def\smlvert{0.3cm};
		
		
		\node[vtx] (1) at (0,\top) {};
		\node[vtx] (2) at (\stp,\half) {};
		\node[vtx] (3) at (2*\stp,\top) {};
		\node[vtx] (4) at (0,\bottom) {};
		\node[vtx] (5) at (2*\stp,\bottom) {};

		\draw[edge] (1) -- (2) node[near end, above=4] {$e_1$};
		\draw[edge] (2) -- (3) node[near start, above=3.1] {$f_1$};;
		
		\draw[edge] (4) -- (2) node[near end, below=5] {$e_2$};
		\draw[edge] (2) -- (5) node[near start, below=2.7] {$f_2$};
		
		
		\node (100) at (2*\stp + \extr,\half) {};
		\node (101) at (2*\stp + \extr + \extrw, \half) {};
		
		\draw[ultra thick, -latex, alizarin] (100) -- (101);
		
		\def\jump{2*\stp + 2*\extr + \extrw};
		
		
		\node[vtx] (1s) at (\jump,\top) {};
		\node[vtx] (2s) at (\jump+\stp,\half) {};
		\node[vtx] (4s) at (\jump,\bottom) {};
		
		\node[mvtx] (v1) at (\jump+\stp + \sml,\smlvert) {};
		\node[mvtx] (v2) at (\jump+\stp + \sml,-\smlvert) {};

		\draw[edge] (1s) -- (2s) node[near end, above=4] {$e_1$};
		\draw[edge] (4s) -- (2s) node[near end, below=5] {$e_2$};
		
		\draw[ledge] (2s) -- (v1);
		\draw[ledge] (2s) -- (v2);
		
		\node[vtx] (2ss) at (\jump+\stp + 2*\sml,\half) {};
		
		\draw[ledge] (v1) -- (2ss);
		\draw[ledge] (v2) -- (2ss);
		
		\node[vtx] (3s) at (\jump+2*\stp + 2*\sml,\top) {};
		\node[vtx] (5s) at (\jump+2*\stp + 2*\sml,\bottom) {};
		
		\draw[edge] (2ss) -- (3s) node[near start, above=3.1] {$f_1$};
		\draw[edge] (2ss) -- (5s) node[near start, below=2.7] {$f_2$};
		
	\end{tikzpicture}
     \captionsetup{singlelinecheck=off}\caption{ Given edge-disjoint paths $P_1 = \langle e_1,f_1\rangle$ and $P_2=\langle e_2,f_2\rangle$ in $G$, the determinant of the matrix $(I-X)^{-1}[\set{e_1,e_2},\set{f_1,f_2}]$ enumerates this pair via the monomial $\xi(P_1,P_2) = x_{e_1f_1}\cdot x_{e_2f_2}$.
		The variables in this monomial provide enough information to uniquely recover $P_1$ and $P_2$. In contrast, for $k=2$, the determinant of $(I-YZ)^{-1}[\set{e_1,e_2},\set{f_1,f_2}]$ assigns this pair  weight 
		\[\tilde{\xi}(P_1,P_2) = (y_{e_11}z_{1f_1} + y_{e_12}z_{2f_1})(y_{e_21}z_{1f_2} + y_{e_22}z_{2f_2}).\]
		One of the terms in the expansion of the above product is $y_{e_11}z_{1f_1}\cdot y_{e_22}z_{2f_2}$. We can read this term as saying ``the first path $P_1$ traverses $e_1$ and $f_1$, and the second path $P_2$ traverses $e_2$ and $f_2$.'' So this monomial provides enough information to recover the pair of paths $\langle P_1,P_2\rangle$ as well.     }
	\label{fig:disjoint}
\end{figure}

We also use the following lemma that simplifies computation of $(I-YZ)^{-1}$. 

\begin{lemma}[Geometric Series Identity]
	\label{lem:reroute}
	We have
	\[(I - YZ)^{-1} = I + Y(I-ZY)^{-1}Z.\]
\end{lemma}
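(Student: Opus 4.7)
The plan is to establish this identity by expanding both sides using the geometric series formula from \Cref{prop:geo-series} and comparing terms.

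First, I would verify that \Cref{prop:geo-series} is applicable to both $YZ$ and $ZY$. Every entry of $YZ$ takes the form $\sum_{v,j} Y[e,(v,j)] Z[(v,j),f]$, which is a sum of degree-two monomials in the $y$ and $z$ variables; in particular, every entry of $YZ$ is a polynomial with constant term zero. The same reasoning applies to $ZY$. Therefore \Cref{prop:geo-series} gives
\[(I - YZ)^{-1} = \sum_{\ell=0}^\infty (YZ)^\ell \qquad \text{and} \qquad (I - ZY)^{-1} = \sum_{\ell=0}^\infty (ZY)^\ell.\]

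Next, I would split off the $\ell=0$ term from the first series and apply the elementary factorization $(YZ)^\ell = Y(ZY)^{\ell-1}Z$ valid for every $\ell \geq 1$, then pull the matrices $Y$ on the left and $Z$ on the right outside the infinite sum to obtain
\[(I-YZ)^{-1} = I + \sum_{\ell=1}^\infty Y(ZY)^{\ell-1}Z = I + Y\grp{\sum_{\ell=0}^\infty (ZY)^{\ell}}Z = I + Y(I-ZY)^{-1}Z,\]
where the final equality uses the second expansion above. This gives the claimed identity.

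The only subtlety is justifying that factoring $Y$ and $Z$ outside an infinite sum of matrices of formal power series is legitimate. Since every nonzero entry of $(YZ)^\ell$ and $(ZY)^\ell$ has degree at least $2\ell$, for any fixed monomial only finitely many values of $\ell$ contribute to the corresponding coefficient in each entry of the sum, so all infinite sums involved are well-defined entrywise and the interchange of multiplication with summation can be checked coefficient by coefficient from the definition of matrix multiplication. Once this bookkeeping is dispensed with, the proof is a few lines of purely algebraic manipulation.
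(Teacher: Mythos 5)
Your proof is correct and follows essentially the same route as the paper: expand $(I-YZ)^{-1}$ via \Cref{prop:geo-series}, regroup the terms with $\ell\ge 1$ as $Y(ZY)^{\ell-1}Z$, and apply \Cref{prop:geo-series} again to recognize $\sum_\ell (ZY)^\ell = (I-ZY)^{-1}$. Your extra remarks on the entrywise well-definedness of the rearrangement are a fine (if optional) elaboration of the same argument.
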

\begin{proof}
	Since every entry of $Y$ and $Z$ have zero constant term, the same is true for the matrices $YZ$ and $ZY$.
	Then by the geometric series formula of \Cref{prop:geo-series}, we have 
	\begin{equation}
		\label{eq:change-grouping}
		(I-YZ)^{-1} = I + (YZ) + (YZ)^2 + \dots = I + Y\grp{\sum_{\ell=0}^\infty (ZY)^\ell}Z.
	\end{equation}
	Applying \Cref{prop:geo-series} again, we have 
	\[\sum_{\ell=0}^\infty (ZY)^\ell = (I-ZY)^{-1}.\]
	Substituting the above equation into the rightmost side of \cref{eq:change-grouping} yields
	\[(I-YZ)^{-1} = I + Y(I-ZY)^{-1}Z\]
	as desired.
\end{proof}

\subsection{Random Evaluation}

We begin by defining random evaluations of the polynomial matrices $Y$ and $Z$.

For all pairs $(e,j)\in E_\new\times [k]$ and  $(j,f)\in [k]\times E_\new$ we introduce independent, uniform random values $b_{ej}$ and $d_{jf}$ respectively over $\mathbb{F}$.
Let $B$ and $C$ be the matrices obtained from matrices $Y$ and $Z$  under the evaluations $y_{ej} = b_{ej}$ and $z_{jf} = c_{jf}$ respectively.

That is, $B$ is the $m_\new\times kn_\new$ matrix defined by setting
\[B[e,(v,j)] = \begin{cases}
	b_{ej} & \text{if }\head(e)=v \\
	0 & \text{otherwise}
	
\end{cases}\]

and $C$ is the $kn_\new\times m_\new$ matrix defined by setting
\[C[(v,j),f] = \begin{cases}
	c_{jf} & \text{if }\tail(f) = v\\
	0 & \text{otherwise.}
\end{cases}\]

We also define subsets of edges
\[\eout = \bigcup_{s\in V} \eout(s) \quad\text{and} \quad\ein = \bigcup_{t\in V} \ein(t)\]
and let $\tilde{B} = B[\eout,\cdot]$ and $\tilde{C}=C[\cdot,\ein]$ be submatrices of $B$ and $C$ restricted to edges exiting and entering original vertices respectively.

\begin{lemma}
	\label{lem:low-rank-characterization}
	Let $S,T\sub E_\new$ be subsets of edges of size $|S|=|T| = r\le k$. Then
	\[\det~(I-BC)^{-1}[S,T]\]
	is nonzero with probability at least $1-1/n^3$ if and only if $G_{\new}$ contains  $r$ edge-disjoint paths from $S$ to $T$.
\end{lemma}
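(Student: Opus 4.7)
The plan is to mirror the proof of \Cref{lem:random-det-char}, swapping in \Cref{lem:uptok-nonzero} to play the role that \Cref{lem:symbolic-det-char} played there. First I would handle the easy direction: if $G_\new$ does not contain $r$ edge-disjoint paths from $S$ to $T$, then \Cref{lem:uptok-nonzero} says $\det (I-YZ)^{-1}[S,T]$ vanishes identically as a formal power series, so any evaluation of it (in particular the random evaluation $\det (I-BC)^{-1}[S,T]$) is zero with probability $1$.

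For the other direction, suppose $G_\new$ does contain $r$ edge-disjoint paths from $S$ to $T$, so \Cref{lem:uptok-nonzero} says $\det (I-YZ)^{-1}[S,T]$ is a nonzero formal power series. To apply rational identity testing, I need to present it as a ratio of two nonzero polynomials of low degree. The key move is to invoke the Geometric Series Identity \Cref{lem:reroute} and write $(I-YZ)^{-1} = I + Y(I-ZY)^{-1}Z$, because $I-ZY$ is only a $kn_\new \times kn_\new$ matrix (whereas $I-YZ$ is $m_\new\times m_\new$). Writing $(I-ZY)^{-1}=\adj(I-ZY)/\det(I-ZY)$ and letting $\Delta = \det(I-ZY)$, this expresses
\[(I-YZ)^{-1}[S,T] \;=\; \frac{I[S,T]\,\Delta + Y[S,\cdot]\,\adj(I-ZY)\,Z[\cdot,T]}{\Delta} \;=\; \frac{N}{\Delta},\]
where $N$ is an $r\times r$ matrix of polynomials. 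Each entry of $ZY$ has degree at most $2$, so $\Delta$ has degree at most $2kn_\new$ and has constant term $1$; entries of $\adj(I-ZY)$ have degree at most $2(kn_\new-1)$; and entries of $Y$ and $Z$ have degree at most $1$. Combining these estimates, each entry of $N$ is a polynomial of degree at most $2kn_\new$, and $\det(I-YZ)^{-1}[S,T] = \det(N)/\Delta^{r}$ is a ratio in which the numerator $\det(N)$ and denominator $\Delta^r$ each have degree at most $2rkn_\new \le 6k^2 n$. Both are nonzero polynomials: $\Delta^r$ because $\Delta$ has constant term $1$, and $\det(N)$ because by assumption the overall rational function is a nonzero formal power series. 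Since $r\le k \le n$, this degree bound is at most $6n^3$, and with $|\mathbb{F}| \ge 12n^6$ \Cref{cor:rat-test} gives nonzero evaluation with probability at least $1 - 2\cdot 6n^3 / (12n^6) = 1 - 1/n^3$, as desired.

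The main obstacle is getting a degree bound that beats $|\mathbb{F}|/n^3$. The naive path of writing $(I-YZ)^{-1} = \adj(I-YZ)/\det(I-YZ)$ directly leaves the adjugate entries with degree $\Theta(m_\new)$, which is too large once $m_\new \gg kn$. The geometric series rerouting through \Cref{lem:reroute} is precisely what shrinks the relevant matrix from size $m_\new$ to size $kn_\new$ and makes the identity testing step succeed with the desired probability.
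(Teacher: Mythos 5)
Your proof is correct, and both directions match the paper's structure (easy direction identical; hard direction via rational identity testing with a $6n^3$ degree bound and $|\mathbb{F}|\ge 12n^6$). The difference is in how the determinant is written as a ratio of low-degree polynomials: you reroute through \Cref{lem:reroute}, expressing $\det\,(I-YZ)^{-1}[S,T]$ as $\det(N)/\Delta^r$ with $\Delta=\det(I-ZY)$ a polynomial in the small $kn_\new\times kn_\new$ matrix, and bound the degrees by $2rkn_\new\le 6n^3$; the paper instead uses the ``naive'' route you dismiss, writing $\det\,(I-YZ)^{-1}[S,T]=\det\bigl((\adj(I-YZ))[S,T]\bigr)/\bigl(\det(I-YZ)\bigr)^r$ and bounding degrees by $2rm_\new\le 2k(m+2kn)<6n^3$. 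Your closing claim that the naive path gives degrees that are ``too large once $m_\new\gg kn$'' is mistaken in this setting: since $G$ is simple, $m_\new\le 3n^2$ and $r\le k<n$, so $2rm_\new\le 6n^3$ as well, which is exactly the bound the paper uses. The rerouting through $I-ZY$ is what makes the \emph{algorithm} fast (\Cref{lem:low-rank-correct} and step 1 of \Cref{alg:k-APC} invert a $kn_\new\times kn_\new$ matrix), but it is not needed for the identity-testing probability; conversely, your version is a perfectly valid alternative and arguably previews why \Cref{lem:reroute} is introduced. One small point you share with the paper's level of informality: when the random evaluation of $\Delta$ (respectively, of $\det(I-YZ)$) is nonzero, one should note that $I-BC$ is then actually invertible over $\mathbb{F}$ — in your route this follows from $\det(I-BC)=\det(I-CB)$ or from applying \Cref{lem:reroute} at the evaluation — so that the evaluated rational expression really is $\det\,(I-BC)^{-1}[S,T]$.
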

\begin{proof}
	By \Cref{lem:uptok-nonzero}, the formal power series
	\begin{equation}
		\label{yz-inv}
		\det~(I-YZ)^{-1}[S,T]
	\end{equation}
	is nonzero if and only if $G_{\new}$ contains $r$ edge-disjoint paths from $S$ to $T$.
	
	If $G_{\new}$ does not have $r$ edge-disjoint paths from $S$ to $T$, then \cref{yz-inv} is the zero polynomial, so its random evaluation 
	\[\det~(I-BC)^{-1}[S,T]\]
	vanishes as well.
	
	Suppose instead $G_{\new}$ does contain $r$ edge-disjoint paths from $S$ to $T$.
	We can write 
	\[(I-YZ)^{-1}[S,T] = \frac{(\adj(I-YZ))[S,T]}{\det~(I-YZ)}.\]
	The matrix $(I-YZ)$ has ones  along its diagonal, and all its other entries have zero constant term.
	Hence $\det(I-YZ)$ is a polynomial with constant term $1$, so by \Cref{prop:inverse} it has a multiplicative inverse over the ring of formal series.
	In particular, the above equation holds both for matrices of rational functions and formal power series.
	
	Write $Q = \det(I-YZ)$.
	By the above discussion, $Q$ is a nonzero polynomial.
	By linearity of the determinant, we have 
	\begin{equation}
		\label{nested:det}
		\det~(I-YZ)^{-1}[S,T]  = \frac{\det~ (\adj(I-YZ))[S,T]}{Q^r}.
	\end{equation}
	
	Since $Q$ is nonzero and $YZ$ is an $m_\new\times m_\new$ matrix where each entry has degree at most two, $Q^r$ has degree at most $2r\cdot m_\new \le 2k(m+2kn) < 6n^3$.
	
	Since $r$ edge-disjoint paths from $S$ to $T$ exist, \cref{eq:detYZ} is nonzero as a formal power series, so by \Cref{nested:det} the numerator $\det (\adj(I-YZ)[S,T])$ is a nonzero polynomial.
	Each entry of $I-YZ$ has degree at most two, so each entry of $\adj(I-YZ)$ has degree less than $2m_\new$, which implies that 
	$\det (\adj(I-YZ)[S,T])$ has degree at most 
	\[2m_\new r \le 2(m+2kn)n \le 6n^3.\]
	So by \Cref{cor:rat-test}, the expression
	\[\det~(I-BC)^{-1}[S,T]\]
	is nonzero in this case with probability at least $1-12n^3/(2^q)$.
	Since we picked $q$ large enough to satisfy $2^q\ge 12n^6 = n^3\cdot (12n^3)$, the desired result follows.
\end{proof}

\begin{lemma}[Small Connectivities via Rank]
	\label{lem:low-rank-correct}
	With high probability, for all $s,t\in V$, we have 
	\[\rank~(\tilde{B}(I-CB)^{-1}\tilde{C})[\eout(s),\ein(t)] = \min(k,\lambda(s,t)).\]
\end{lemma}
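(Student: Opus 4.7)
The plan is to rewrite the target expression as a submatrix of $(I-BC)^{-1}$ using the geometric series identity, and then invoke the determinantal characterization from \Cref{lem:low-rank-characterization} together with \Cref{small:conn}. Applying \Cref{lem:reroute} to the random matrices $B$ and $C$ yields $(I-BC)^{-1} = I + B(I-CB)^{-1}C$, so $B(I-CB)^{-1}C = (I-BC)^{-1} - I$. Because $\tilde{B} = B[\eout,\cdot]$ and $\tilde{C} = C[\cdot,\ein]$, and $\eout(s)\subseteq\eout$, $\ein(t)\subseteq\ein$, restricting to rows in $\eout(s)$ and columns in $\ein(t)$ gives
\[
(\tilde{B}(I-CB)^{-1}\tilde{C})[\eout(s),\ein(t)] = (I-BC)^{-1}[\eout(s),\ein(t)] - I[\eout(s),\ein(t)].
\]

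The next key step is to observe that $I[\eout(s),\ein(t)]$ is the zero matrix for every pair $(s,t)\in V\times V$. In $G_\new$ every edge in $\eout(s)$ has tail $s$ and head $s_{\t{out}}$, while every edge in $\ein(t)$ has tail $t_{\t{in}}$ and head $t$. Since $s_{\t{out}}$ and $t_{\t{in}}$ are newly introduced vertices distinct from any original vertex, no edge lies in both sets, so $\eout(s)\cap\ein(t)=\emptyset$. Thus the target submatrix coincides with $(I-BC)^{-1}[\eout(s),\ein(t)]$, which is $k\times k$ because the degree-reduction gadget ensures $|\eout(s)| = |\ein(t)| = k$.

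With this reduction in hand, the rank is computed via the submatrix criterion: $\rank (I-BC)^{-1}[\eout(s),\ein(t)]$ equals the largest $r\le k$ for which there exist $S\subseteq \eout(s)$ and $T\subseteq\ein(t)$ of size $r$ with $\det (I-BC)^{-1}[S,T]\neq 0$. Since $r\le k$, \Cref{lem:low-rank-characterization} applies: this determinant is identically zero when $G_\new$ has no $r$ edge-disjoint paths from $S$ to $T$, and nonzero with probability at least $1-1/n^3$ otherwise. Writing $\lambda^\star = \min(k,\lambda(s,t))$, which equals the connectivity from $s$ to $t$ in $G_\new$ by \Cref{small:conn}, any collection of $\lambda^\star$ edge-disjoint $st$-paths supplies distinct first edges forming $S\subseteq\eout(s)$ and distinct last edges forming $T\subseteq\ein(t)$ of size $\lambda^\star$, certifying that the rank is at least $\lambda^\star$ with high probability. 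Conversely, no $r$ edge-disjoint $S$-to-$T$ paths exist for any $r>\lambda^\star$, so all larger submatrix determinants vanish deterministically, capping the rank at $\lambda^\star$. A union bound over the $n^2$ pairs $(s,t)\in V\times V$ then yields the claim with high probability.

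The main step that requires attention is the identity-matrix cancellation on $[\eout(s),\ein(t)]$: without the degree-reduction gadget separating incoming and outgoing neighborhoods at each original vertex, this would fail, and the reduction to a clean submatrix of $(I-BC)^{-1}$ would not go through. Once that structural observation is in place, the remainder of the argument is essentially a direct application of \Cref{lem:low-rank-characterization,small:conn} combined with the submatrix characterization of rank, requiring no new ideas beyond what has already been developed.
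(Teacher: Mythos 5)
Your proposal is correct and follows essentially the same route as the paper's proof: apply \Cref{lem:reroute} to $B,C$, use $\eout(s)\cap\ein(t)=\emptyset$ in $G_\new$ to drop the identity term so the target matrix equals $(I-BC)^{-1}[\eout(s),\ein(t)]$, then combine \Cref{lem:low-rank-characterization} with \Cref{small:conn} and the submatrix characterization of rank, finishing with a union bound over the $n^2$ pairs. Your explicit justification that the identity block vanishes via the degree-reduction gadget, and that larger submatrix determinants vanish deterministically, just spells out steps the paper states more tersely.
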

\begin{proof}
	Fix $s,t\in V$.
	Let $\lambda$ be the connectivity from $s$ to $t$ in $G_\new$.
	
	By \Cref{small:conn}, $\lambda\le k$.
	Thus by \Cref{lem:low-rank-characterization} and the definition of connectivity, with probability at least $1-1/n^3$, $\lambda$ is the largest nonnegative integer for which there exist subsets $S\sub\eout(s)$ and $T\sub\ein(t)$ of size $\lambda$ such that 
	\[\det~(I-BC)^{-1}[S,T]\]
	is nonzero.
	In other words, $\lambda$ is equal to the rank of $(I-BC)^{-1}[\eout(s),\ein(t)]$.
	
	By applying \Cref{lem:reroute} with the random evaluation sending $Y$ and $Z$ to $B$ and $C$ respectively, we have 
	\[(I-BC)^{-1} = I + B(I-CB)^{-1}C.\]
	Since $s,t\in V$ are original vertices, we know that $\eout(s)\cap\ein(t)=\emptyset$.
	Thus 
	\[(I-BC)^{-1}[\eout(s),\ein(t)] = (B(I-CB)^{-1}C)[\eout(s),\ein(t)].\]
	Of course we also have
	\[(B(I-CB)^{-1}C)[\eout(s),\ein(t)] = (\tilde{B}(I-CB)^{-1}\tilde{C})[\eout(s),\ein(t)].\]
	So with probability at least $1-1/n^3$, 
	\[\rank~(\tilde{B}(I-CB)^{-1}\tilde{C})[\eout(s),\ein(t)] = \min(k,\lambda(s,t))\]
	where we are using the fact from \Cref{small:conn} that $\lambda = \min(k,\lambda(s,t))$.
	The claim follows by a union bound over all $n^2$ pairs of original vertices $(s,t)$.
\end{proof}

\subsection{The Algorithm}

\begin{algorithm}[t]
	\caption{The algorithm solving \textsf{$k$-APC} from \cite{k-APC}.}
	\label{alg:k-APC}
	\setcounter{AlgoLine}{0}
	Compute the matrix $M = \tilde{B}(I-CB)^{-1}\tilde{C}$.
	\label{step:encode}
	
	For each pair $(s,t)$ of original vertices, return
	\[\rank M[\eout(s),\ein(t)]\]
	as the value for $\min(k,\lambda(s,t))$.
	\label{step:bounded-decode}
\end{algorithm}

\kapc*
\begin{proof}
	By \Cref{lem:low-rank-correct}, \Cref{alg:k-APC} correctly solves \textsf{$k$-APC} with high probability.
	It remains to bound the runtime of the algorithm.
	
	In step 1 of \Cref{alg:k-APC}, we compute $\tilde{B}(I-CB)^{-1}\tilde{C}$.
	The matrix $CB$ has rows and columns indexed by pairs $(v,i)\in V_\new\times [k]$.
	For any $(u,i)$ and $(v,j)$ in $V_\new\times [k]$, we have 
	\[CB[(u,i),(v,j)] = \sum_{e \in \eout(u)\cap\ein(v)} c_{ie}b_{ej}.\]
	If $v\in V$, then the summation in the right-hand side of the above equation is empty unless    $u=v_{\t{in}}$, in which case the sum has exactly $k$ terms (one for each parallel edge from $v_{\t{in}}$ to $v$).
	So computing the entries of $CB$ corresponding to this case takes $nk^2\cdot k = nk^3$ operations.

	Similarly, if $u\in V$, the sum in the above equation is empty unless
	$v=u_{\t{out}}$, in which case the sum consists of exactly $k$ terms.
	Computing the entries of $CB$ corresponding to this case takes $nk^3$ operations as well.
	
	For all other cases where $u,v\in V_\new\setminus V$, the sum consists of at most a single term.
	So we can compute the remaining entries of $CB$ with $((n_\new - n)k)^2 = 4n^2k^2$ operations.
	
	Since $k < n$ we have $nk^3 < (nk)^2$, and we can compute $CB$ in $\tilde{O}((kn)^2)$ time.
	Having computed $CB$, we can compute $(I-CB)^{-1}$ in $\tilde{O}((kn)^\omega)$ time by \Cref{matrix:inverse}.
	So step 1 of \Cref{alg:k-APC} takes $\tilde{O}((kn)^\omega)$ time overall.

	Step 2 of \Cref{alg:k-APC} involves computing ranks of $n^2$ separate $k\times k$ matrices,
	which by \Cref{matrix:rank}  takes $\tilde{O}(k^\omega n^2)$ time.
	
	So overall, the algorithm for \textsf{$k$-APC} takes $\tilde{O}((kn)^\omega)$ time as claimed. 
\end{proof}

\section{Conclusion}
\label{sec:conclusion}

In this paper, we presented alternate derivations of the $\tilde{O}(m^\omega)$ time algorithm for \textsf{APC} of \cite{CheungLauLeung2013}, and the $\tilde{O}((kn)^\omega)$ time algorithm for \textsf{$k$-APC} of \cite{k-APC}.
Our approach works by testing, via random evaluation, whether certain generating functions enumerating edge-disjoint families of paths are nonzero or not.
We conclude this paper by pointing out some connections between this perspective and other classical arguments in mathematics and computer science, and highlighting the current main open problems concerning the complexity of computing connectivities.

\paragraph*{Combinatorics}

Our proof for \Cref{lem:lgv} obtains a generating function for edge-disjoint walks by pairing up monomials corresponding to intersecting walks, and arguing their contributions cancel.
This reasoning
is essentially identical to the proof of the classic Lindstr\"{o}m-Gessel-Viennot  lemma from combinatorics, which is often used in mathematics to enumerate families of disjoint lattice paths.
We refer the reader to \cite[Chapter 29]{fromthebook} for an accessible exposition of this theorem and some of its applications. 
For additional examples of this technique of pairing up and cancelling extraneous terms in matrix algebra, see \cite{Zeilberger1985}.

\paragraph*{Algebraic Algorithms}

Our exposition of the \textsf{APC} and \textsf{$k$-APC} algorithms works by interpreting certain determinants combinatorially, as generating functions for families of edge-disjoint walks.
Previous work has also leveraged combinatorial interpretations of the determinant (as polynomials instead of formal power series however) to construct interesting arithmetic circuits \cite{comb-det,det-hom}. 
In particular, the sign-reversing involution designed in \cite[Proof of Theorem 1]{comb-det} is very similar to the suffix swapping argument we use in the proof of \Cref{lem:lgv}.

The general approach of solving graph problems by testing whether certain enumerating polynomials are nonzero is now a common technique in graph algorithms: we refer the reader to \cite{det-sieving} and the citations therein for previous examples of this paradigm in the literature.

\paragraph*{Open Problems}

\begin{description}
	\item[Faster Algorithms]
	The main open question in this area is: can we solve \textsf{APC} faster? 
	In particular, does there exist some constant $\varepsilon > 0$ such that \textsf{APC} can be solved in general directed graphs in $O(n^{4-\varepsilon})$ time?
	Even obtaining such an algorithm in the special case of directed acyclic graphs would be a major breakthrough.
	
	\hfill
	
	For constant $k$, it is conjectured that \textsf{$k$-APC} requires $n^{\omega-o(1)}$ time to solve, because algorithms for \textsf{$k$-APC} can be used to solve a problem known as \textsf{Boolean Matrix Multiplication}, which researchers currently do not know how to solve faster than integer matrix multiplication. 
	Under this conjecture, the $\tilde{O}((kn)^\omega)$ algorithm for \textsf{$k$-APC} is near-optimal for constant $k$. 
	However, this does not rule out the possibility of algorithms with better dependence on $k$.    
	
	\hfill
	
	Can we obtain faster algorithms for \textsf{$k$-APC} when $k = n^{\delta}$ for some small constant $\delta > 0$?
	
	\hfill
	
	\item[Better Lower Bounds]
	In the \textsf{4-Clique} problem, we are given an undirected graph $G$ on $n$ nodes, and are tasked with determining if $G$ contains four vertices which are mutually adjacent.
	The best conditional lower bounds for \textsf{APC} come from observing that \textsf{4-Clique} reduces to \textsf{APC} \cite[Section 4]{ap-bounded-mincut}.
	Using the current best algorithms for rectangular matrix multiplication \cite[Table 1]{mmult-life}, the fastest 
	known algorithm for \textsf{4-Clique} takes $O(n^{3.251})$ time \cite[Theorem 1]{EisenbrandGrandoni2004}.
	Even if we conjecture this runtime time is optimal, the resulting lower bound for \textsf{APC} is a far cry from the $n^{4+o(1)}$ runtime we have for \textsf{APC} in dense graphs.
	Moreover, if $\omega = 2$ this lower bound for \textsf{APC} weakens to $n^{3-o(1)}$.
	
	\hfill
	
	Can we show better lower bounds for \textsf{APC}, which are supercubic even if $\omega = 2$?
	
	\hfill
	
	\item[Faster Verification]
	Instead of solving \textsf{APC} directly, it would also be interesting to obtain better deterministic and randomized \emph{verifiers} for \textsf{APC}.
	A verifier for \textsf{APC} is given the input to the problem, and additionally receives claims for the values of $\lambda(s,t)$ for all pairs of vertices $(s,t)$, as well as some small proof string which can be thought of as ``evidence'' that the claimed values are correct.
	The verifier reads all of these inputs, and then must determine if the claimed values are correct or not.
	The goal is to get a verifier which is correct (with high probability) and runs as quickly as possible.
	
	\hfill
	
	The fastest known deterministic verifier for \textsf{APC} runs in $O(n^{3.251} + n^{2.5}\sqrt{m})$ time \cite{Trabelsi23}.
	It would be interesting to improve this runtime to $O(n^{3.251})$ to match the fastest known algorithm (and deterministic verifier runtime) for \textsf{4-Clique}.
	
	\hfill
	
	It would also be interesting to obtain faster \emph{randomized} verifiers for \textsf{APC}.
	Currently, no randomized verifiers running faster than the deterministic verifier discussed above are known for \textsf{APC}.
	This is surprising, since the best lower bound for \textsf{APC} comes from the \textsf{4-Clique} problem, and the \textsf{4-Clique} problem admits a randomized verifier running in near-optimal $\tilde{O}(n^2)$ time \cite[Section 4]{MA-protocol}.
	
	\hfill
	
	Does \textsf{APC} admit a faster randomized verifier? 
	Alternatively, can we explain the lack of fast randomized verifiers for \textsf{APC} by obtaining an efficient reduction to \textsf{APC} from some problem which we do not believe admits fast randomized verifiers?
	
	\hfill
	
	\item[Vertex-Connectivity Variants]
	Given vertices $s$ and $t$ in graph $G$, the \emph{vertex connectivity} from $s$ to $t$, denoted by $\nu(s,t)$, is the maximum number of internally vertex-disjoint paths from $s$ to $t$ in $G$.
	One can study the \textsf{All-Pairs Vertex Connectivity (APVC)} and \textsf{$k$-Bounded All-Pairs Vertex Connectivity ($k$-APVC)} problems as variants of \textsf{APC} and \textsf{$k$-APVC} where we are tasked with computing $\nu(s,t)$ and $\min(k,\nu(s,t))$ respectively, for all pairs $(s,t)$.

	\hfill
	
	In directed graphs, \textsf{APVC} and \textsf{$k$-APVC} reduce to \textsf{APC} and \textsf{$k$-APVC} respectively, so it might be easier to design faster algorithms for the former problems instead of the latter problems. 
	For example, it is known that \textsf{$k$-APVC} can be solved in $\tilde{O}(k^2n^\omega)$ time \cite[Theorem 5]{k-APC}, which is faster than the $\tilde{O}((kn)^\omega)$ runtime for \textsf{$k$-APC} if $\omega > 2$.
	
	\hfill
	
	In undirected graphs, although \textsf{APC} can be solved in $\tilde{O}(n^2)$ time, \textsf{APVC} is not known to admit a near-quadratic time algorithm.
	In fact, the best known lower bounds for \textsf{APC} in directed graphs also hold for \textsf{APVC} in undirected graphs \cite{vertex-conn-lower-bound}.
	
	\hfill
	
	So again, it might be easier to find faster algorithms for \textsf{APVC} in undirected graphs instead of tackling the general problem of \textsf{APC} in directed graphs.
	For example, \textsf{APVC} in undirected graphs can be solved in $\tilde{O}(m^2)$ time \cite[Section 3]{Trabelsi23}, which is faster than the $\tilde{O}(m^\omega)$ runtime for \textsf{APC} if $\omega > 2$, and admits a deterministic $O(n^{3.251})$ time verifier \cite[Lemma 2.2]{Trabelsi23}, which is faster than the known $O(n^{3.521} + n^{2.5}\sqrt{m})$ time deterministic verifier for \textsf{APC}.
	
	\hfill
	
	\item[Derandomization]
	
	The algorithms we presented for \textsf{APC} and \textsf{$k$-APC} in this work are randomized. 
	Can we solve these problems with respective $\tilde{O}(m^\omega)$ and 
	$\tilde{O}((kn)^\omega)$ runtimes \emph{deterministically}? 
	
\end{description}

\printbibliography

\end{document}